\documentclass[11pt]{article}  
\usepackage{fullpage} 

\usepackage{amsthm}
\usepackage{amsmath}
\usepackage{amssymb}
\usepackage{todonotes}
\usepackage{mathtools}
\usepackage{xspace}
\usepackage{xcolor}
\usepackage{verbatim}
\usepackage{enumitem}
\usepackage{multirow}
\usepackage{nicefrac}
\usepackage{hyperref}
\usepackage{cite}

\newcommand{\MM}{\mbox{\sc MaxMatching}}

\newcommand{\disj}{\mbox{\sc Disj}}
\newcommand{\perm}{\mbox{\sc Perm}}
\newcommand{\prob}{\mbox{\rm P}}

\newcommand{\Diam}{\mbox{\sc Diameter}\xspace}
\newcommand{\Degeneracy}{\mbox{\sc Degeneracy}}
\newcommand{\MaxClique}{\mbox{\sc MaxClique}}
\newcommand{\MaxIS}{\mbox{\sc MaxIS}}
\newcommand{\MinCover}{\mbox{\sc MinVC}}

\newcommand{\kCol}{\mbox{\sc Coloring}}
\newcommand{\Col}{\mbox{\sc Coloring}}

\newcommand{\CCN}{\mathsf{CC}^\mathsf{N}}
\newcommand{\NP}{\mathrm{NP}}
\newcommand{\MA}{\mathrm{MA}}
\newcommand{\cgst}{\textsc{congest}\xspace}

\newtheorem{theorem}{Theorem}
\newtheorem{lemma}{Lemma}

\newtheorem{corollary}{Corollary}
\newtheorem{fact}{Fact}


\title{Semi-Streaming Algorithms for Graph Property Certification}

\author{
Avinandan Das\thanks{Email: {\tt adas33745@gmail.com}. Additional support from the ANR projects DUCAT (ANR-20-CE48-0006) and QuData (ANR-18-CE47-0010) and ERC Consolidator Grant Distributed Biological Algorithms. }  \\
{\small Department of Computer Science}\\
{\small Aalto University} \\
{\small Finland}
\and 
Pierre Fraigniaud\thanks{Additional support from the ANR projects DUCAT (ANR-20-CE48-0006), QuData (ANR-18-CE47-0010), and ENEDISC (ANR-24-CE48-7768-01). } \\
{\small IRIF}\\
{\small CNRS and Universit\'e Paris Cit\'e} \\
{\small France}
\and 
Ami Paz \\
{\small LISN}\\
{\small CNRS and Universit\'e Paris-Saclay} \\
{\small France}
\and 
Adi Ros\'en\thanks{Email: {\tt adiro@irif.fr}. Research partially supported by ANR project PREDICTIONS.}\\
{\small IRIF}\\
{\small CNRS and Universit\'e Paris Cit\'e} \\
{\small France}
}

\date{}

\begin{document}
\maketitle

\begin{abstract}
We introduce the  {\em certification} of solutions to graph problems when access to the input is restricted. This topic has received a lot of attention in the distributed computing setting, and we introduce it here in the context of \emph{streaming} algorithms, where the input is too large to be stored in memory. 

Given a graph property~$\prob$, a \emph{streaming certification scheme} for $\prob$ is a \emph{prover-verifier} pair where the prover is a computationally unlimited but non-trustable oracle, and the verifier is a streaming algorithm. For any input graph, the prover provides the verifier with a \emph{certificate}. The verifier then receives the input graph as a stream of edges in an adversarial order, and must check whether the certificate is indeed a \emph{proof} that the input graph satisfies~$\prob$. The main complexity measure for a streaming certification scheme is its \emph{space complexity}, defined as the sum of the size of the certificate provided by the oracle, and of the memory space required by the verifier.

We give streaming certification schemes for several graph properties, including maximum matching, diameter, degeneracy, and coloring, with space complexity matching the requirement of \emph{semi-streaming}, i.e., with space complexity $O(n\,\mbox{polylog}\, n)$ for $n$-node graphs. All these problems do {\em not} admit semi-streaming algorithms, showing that also in the (semi) streaming setting, certification is sometimes easier than calculation (like $NP$). For each of these properties, we provide upper and lower bounds on the space complexity of the corresponding certification schemes, many being tight up to logarithmic multiplicative factors. We also show that some graph properties are hard for streaming certification, in the sense that they cannot be certified in semi-streaming, as they require $\Omega(n^2)$-bit certificates. 
\end{abstract}

\newpage

\section{Introduction}
\label{sec:intro}

In this paper, we introduce a new framework for the {\em certification} of solutions to computing problems, when access to the input is restricted. More specifically, we are interested in the case where the input is too large to be stored in memory, and is accessed as a \emph{stream} of items by a limited-memory algorithm, usually referred to in the literature as a  {\em streaming algorithm} (cf.~\cite{AlonMS99,McGregor14}). Recall that streaming algorithms are algorithms for processing data streams in which the input is presented as a sequence of items, and are designed to operate with limited memory (which prevents them from storing the whole input in memory).
For our study of streaming certification, we concentrate on graph problems and the semi-streaming regime~\cite{FeigenbaumKMSZ05,McGregor14}, i.e., the streaming algorithm is limited to memory of size $O(n~\mbox{polylog~}n)$, for an $n$-node graph. An $n$-node input graph $G$ with $m$ edges is thus presented to the streaming algorithm as a sequence of edges, in arbitrary, possibly adversarial, order, where each edge is merely a pair $(i,j)\in [n]\times [n]$. Note that if $|E(G)|\gg n~\mbox{polylog~}n$ then the streaming algorithm cannot keep all edges in memory for computing its output. 

We study decision problems, i.e., whether an input graph has a certain property such as containing a matching of size at least~$k$, or being $3$-colorable. 
We study properties that cannot be decided by a semi-streaming algorithm, and examine the question whether these properties can be {\em certified} by a semi-streaming algorithm. 
More precisely, our setting consists of a prover-verifier pair, where the semi-streaming algorithm (the verifier) receives a {\em certificate} from an untrusted oracle (the prover) before the stream of items starts. The certificate is supposed to be a proof 
that the graph $G$ represented by the stream of edges to be given satisfies a given property~$\prob$, e.g., ``$G$~does not have an independent set of size at least~$k$''. 
The certification scheme must satisfy two properties for being correct. 

\begin{description}
    \item[Completeness:] On a legal instance, i.e., for a graph $G$ satisfying the property~$\prob$, the prover must be able to provide the verifier with a certificate such that, for every order of edges in the stream, the verifier algorithm outputs {\tt accept}. 
    \item[Soundness:] On an illegal instance, i.e., for a graph $G$ not satisfying the property~$\prob$, it must be the case that, for every possible certificate provided by the prover, and for any order of the edges in the stream, the verifier outputs {\tt reject}. 
\end{description}

Note that, similar to the definition of, e.g., NP, the verifier is not subject to any requirement regarding ``wrong'' certificate given for a legal instance. 
Completeness only requires the verifier to accept a legal instance whenever it is given a ``right'' certificate for this instance. On the other hand, soundness requires that, for an illegal instance, the prover cannot fool the verifier with a certificate that would lead it to accept.  

We investigate graph decision problems that cannot be decided by a semi-streaming algorithm, i.e., by a semi-streaming algorithm without a certificate. For these problems, we give upper and lower bounds on the memory size of the verifier, and on the size of the certificate provided by the prover. 

We stress the fact that, in addition to its conceptual interest, 
the framework of \emph{streaming certification scheme} may find applications in, e.g., cloud computing, when an expensive computation on a huge input might be delegated to, say, a cloud (or any external powerful computing entity). A user may then want to check that the output provided by the cloud is correct. To this end, the cloud could be asked to provide the user with a (small) certificate. While the user cannot store the entire input in memory, it can receive the input as a stream (possibly in adversarial order), and it should be able to verify the correctness of the cloud's output using the certificate.

\subsection{Our Results} 

We focus on graph parameters, including maximum matching, degeneracy, diameter, chromatic number, vertex-cover,  maximum independent set, and maximum clique. 
For each of these parameters, we aim at certifying whether it is at most, or at least a given threshold value~$k$ (which may be fixed, or given as input). For instance, $\MM_{\leq k}$ (resp., $\MM_{\geq k}$) is the boolean predicate corresponding to whether the input graph given as a stream has a maximum matching of size at most~$k$ (resp., at least~$k$). 

Table~\ref{tab:our-results} summarizes our results. 
For each graph property under consideration, this table provides the space complexity of a certification scheme for it, that is, the size $c$ of the certificate provided by the prover, and the space-complexity $m$ of the deterministic verifier (not counting the certificate size, which is accessed by the verifier as a read-only memory). 
The lower bounds for the certification schemes refer to the sum of the prover's certificate size and the verifier's memory space, i.e., $c+m$. 
We also provide lower bounds on the  space complexity~$s$ of streaming algorithms \emph{deciding} the properties under scrutiny (without  certificate). 

\begin{table}[tb]
    \small
    \centering
    \renewcommand{\arraystretch}{1.5}
    \setlength{\tabcolsep}{3.5pt}
    \begin{tabular}{|c||c||c|c|c||c||c|c|c|}
        \hline
        \multirow{2}{*}{Property} 
        & \multicolumn{4}{|c||}{$\leq k$} & \multicolumn{4}{|c|}{$\geq k$} \\
        & $s$ & $c$           & $m$           & $c+m$     & $s$ & $c$               & $m$       & $c+m$ \\
\hline
\MM     
        &   $\Theta(n^2)$ \cite{FeigenbaumKMSZ05}     & $O(n)$        & $O(n\log n)$  &     ?     & $\Theta(n^2)$ \cite{FeigenbaumKMSZ05}     & $O(k\log n)$  & $O(\log n)$    & $\Omega(n)$ \\
        &      &               &               &          &    &    $O(n\log \Delta)$  & $O(n)$ &  \\
\hline
\Degeneracy 
        & $\Theta(n^2)$ &  $O(n\log n)$  & $O(n\log k)$  & $\Omega(n)$ &  $\Theta(n^2)$ & $O(k\log n)$  & $O(n\log k)$ & ? \\
\hline
\Diam 
        &  $\Theta(n^2)$ & - &  -             &      $\Theta(n^2)$     & $\Theta(n^2)$ & $O(n\log k)$      & $O(\log n)$ &  $\Omega(n)$ \\
\hline
\kCol   
        & $\Theta(n^2)$ & $O(n\log k)$ & $O(\log n)$ &  $\Omega(n\log n)$   &  $\Theta(n^2)$         & - & - &  $\Theta(n^2)$ \\
\hline
\MaxClique & $\Theta(n^2)$ &    -    &      -         &    $\Theta(n^2)$       & $\Theta(n^2)$ & $O(k\log n)$  &      $O(\log n)$         & ? \\
\hline
\MaxIS  &  $\Theta(n^2)$ &    -      &     -          &     $\Theta(n^2)$      & $\Theta(n^2)$ & $O(k\log n)$  &   $O(\log n)$  & ?\\
\hline
\MinCover & $\Theta(n^2)$ &  $O(k\log n)$  &  $O(\log n)$ &     ?      & $\Theta(n^2)$ &       -            &      -         & $\Theta(n^2)$ \\
\hline
    \end{tabular}    
    \caption{\sl Summary of our results regarding the space complexity of deciding vs.\ certifying graph properties in streaming. Notations: $s$ denotes the space complexity of deciding the property without certificates; $c$ and $m$ are the parameters of the certification scheme, where $c$ denotes the certificate size, and $m$ the space complexity of the verifier.}
    \label{tab:our-results}
\end{table}

\sloppy
As an example, it is known~\cite{FeigenbaumKMSZ05} that deciding whether the input graph has a perfect matching requires a streaming algorithm to have space complexity $\Omega(n^2)$. But, we show that both $\MM_{\leq k}$ and $\MM_{\geq k}$ can be \emph{certified} in semi-streaming. 
Specifically, there is a certification scheme for $\MM_{\leq k}$ with certificate size $O(n)$ bits and verifier's space complexity $O(n\log n)$, and there is a certification scheme for $\MM_{\geq k}$ with certificate size $O(k\log n)$ bits and verifier's space complexity $O(\log n)$. 
We give another certification scheme for $\MM_{\geq k}$ with certificate size $O(n\log \Delta)$ bits and verifier's space complexity $O(n)$, where $\Delta$ denotes the maximum degree of the input graph. These latter certification schemes for $\MM_{\geq k}$ are essentially the best that one can hope for, up to poly-logarithmic factors, as we also show that any certification scheme for $\MM_{\geq k}$ with $c$-bit certificates and verifier's space complexity $m$ bits satisfies $c+m=\Omega(n)$ bits. 

All the other entries in the tables read similarly. Note that $\MM$ and $\Degeneracy$ are two problems that behave very well with respect to semi-streaming certification, in the sense that, for each of them, the two decision problems relative to the question whether the solution is at most~$k$ or at least~$k$ can be certified in semi-streaming. Instead, all the other graph problems listed in Table~\ref{tab:our-results} have one of their two variants ($\leq k$ or $\geq k$) that cannot be certified in semi-streaming, as $c+m=\Omega(n^2)$ for this variant. As we shall see further in the paper, the $\Omega(n^2)$ lower bound on $c+m$ is even quite strong for some problems, in the sense that it holds even for constant~$k$. 

The cells with a question mark ``?'' correspond to open problems, that is, problems for which we do not have non-trivial answers. For instance, we do not know whether it is possible to certify $\MM_{\leq k}$ with $c+m=o(n)$. 
The cells marked with ``-'' correspond to problems for which the trivial upper bound $O(n^2)$ on $c+m$ is tight (up to constant multiplicative factors). 

\subsection{Related Work}
\label{sec:related-work}

\paragraph{Streaming and semi-streaming.}
Our work studies streaming algorithm, aiming at handling inputs of huge size that cannot be stored in memory, and are usually received by the (streaming) algorithm as a stream of items, many times in adversarial order. 
This line of work finds its origins in the seminal work of Alon et al.~\cite{AlonMS99}. Specifically, we are interested in graph problems, where one usually seeks streaming algorithms with memory size of $O(n~\mbox{polylog}~n)$, for $n$ nodes graphs, a regime usually referred to as {\em semi streaming}, as coined by the work of Feigenbaum et al.~\cite{FeigenbaumKMSZ05}. This regime allows one to store the nodes of the graph, but not all edges (unless the graph is very sparse). Many works consider this regime, e.g.,~\cite{EmekR16,ChenKL22,Kapralov13,PazS19,FeigenbaumKMSZ08,AhnG13} to name a few.

\paragraph{Distributed certification.} 

Our work is very much inspired by the theory of \emph{distributed certification}~\cite{FKP13,GS16,KKP10}. In this setting, the processing units of a distributed system are connected as the $n$-nodes of a labeled graph $G=(V,E)$, where the label of a node models the state of that node. The goal is to certify that the global state of the system encoded by the collection of labels is legal with respect to some boolean predicate on labeled graphs. 
A typical example is certifying that a collection of pointers (each node pointing to a neighbor) forms a spanning tree, or even a minimum-weight spanning tree~\cite{KK07}. A \emph{distributed certification scheme} for a boolean predicate $\prob$ on labeled graphs is a pair prover-verifier. The prover is a non-trustable oracle assigning a certificate to each node, and the verifier is a distributed algorithm whose objective is to check that the collection of certificates is a proof that the labeled graph satisfies $\prob$. If yes, then all nodes must accept, otherwise at least one node must reject. The verifier is ideally performing in a single round of communication between neighboring nodes. The main complexity measure is the size of the certificates assigned to the nodes. We refer to~\cite{FF16} for a survey. Interestingly, this concept has been extended to distributed interactive proofs~\cite{NaorPY20}, to distributed zero-knowledge proofs~\cite{BickKO22}, and to distributed quantum proof~\cite{FraigniaudGNP21}.

\paragraph{Annotated Streaming Model.}

A related model of streaming verification, a.k.a the \emph{annotated streaming model}, was formulated in~\cite{ChakrabartiCMT14}. Distributed certification can be viewed as a specific instantiation of annotated streaming. To see why, recall that the annotated streaming model consists of a (computationally powerful, and deterministic) \emph{oracle} and a (possibly randomized) \emph{algorithm}. The oracle ``annotates'' each item of the input stream to help the algorithm. The aim for the  algorithm is to compute some function $f(x)$ on input stream $x = (x_1,\ldots, x_m)$ where each $x_i$ belongs to some universe $\mathcal{U}$ (e.g., for a graph $G=(V,E)$, $\mathcal{U}=E$). More formally, an \emph{annotated data streaming scheme} consists of a pair $(\mathfrak{h},\mathcal{A})$ where $\mathfrak{h}= (h_1,h_2,\ldots)$ is the \emph{annotation}, and $\mathcal{A}$ is the \emph{algorithm}, that is, $\mathcal{A}$ is a (possibly randomized) streaming algorithm. Their aim is to compute a function $f$ on the input stream $x$. The algorithm $\mathcal{A}$ either outputs a value taken from the range of the function~$f$, or it outputs~$\bot$.  Two types of schemes are considered, which depend on whether the oracle knows the input stream in advance (i.e., before the stream begins).	
\begin{itemize}
		\item Prescient scheme: The oracle knows the whole input stream $x$ in advance. In this case, each item $x_i$ is augmented with a bit string which depends on~$x$. More formally,  for every $i\in [m]$, $h_i\colon \mathcal{U}^m\rightarrow \{0,1\}^*$ is a function, and the streaming algorithm $\mathcal{A}$ receives the annotated stream $\big( (x_1,h_1(x)), \ldots,(x_m,h_m(x))\big )$ as input. 
        
		\item Online scheme: The oracle does not know the input stream in advance, which is revealed to the oracle in the order of the stream. In this case, each item $x_i$ is augmented with a bit string which depends only on the stream seen so far. More formally, for every $i\in [m]$, $h_i\colon \mathcal{U}^i\rightarrow \{0,1\}^*$ is a function,  and the streaming algorithm $\mathcal{A}$ receives the annotated stream $\big ( (x_1,h_1(x_1)),(x_2,h_2(x_1,x_2)),\ldots,(x_m,h_m(x_1,\ldots,x_m))\big )$ as input.
\end{itemize}

A (prescient or online) scheme $(\mathfrak{h},\mathcal{A})$ for a function~$f$ in the annotated streaming model, where $\mathfrak{h}= (h_1,h_2,\ldots)$, must satisfies the following completeness and soundness properties. 

\begin{description}
     \item[Completeness:] For every input stream $x=(x_1,\dots,x_m)$,  
     $$\Pr[\mbox{$\mathcal{A}$ outputs $f(x)$ on input $\big((x_1,h_1),\dots,(x_m,h_m)\big)$ }]\geq \nicefrac23$$ 
     
     \item[Soundness:] For every input stream $x=(x_1,\dots,x_m)$, and for every annotation $\mathfrak{h}' = (h_1',\ldots, h_m')$,   $$\Pr[\mbox{$\mathcal{A}$ outputs $f(x)$ or $\bot$ on input $\big((x_1,h'_1),\dots,(x_m,h'_m)\big)$]}\geq \nicefrac23$$ 
 \end{description}
 
In other words, the completeness and soundness of the annotation scheme $(\mathfrak{h},\mathcal{A})$ imply that 
(1)~given the \emph{correct} annotation~$\mathfrak{h}$, with probability at least~$\nicefrac 23$ the algorithm $\mathcal{A}$ correctly computes the function~$f$, and 
(2)~for any annotation $\mathfrak{h}'$ (presumably \emph{adversarial}, whose purpose is to make $\mathcal{A}$ output a wrong value for $f(x)$), with probability at least~$\nicefrac 23$ the algorithm $\mathcal{A}$ either outputs the correct value for $f(x)$, or outputs~$\bot$, where the latter can be interpreted as  the fact that the algorithm has detected that it has not received the correct annotation for the input~$x$. 

A great deal of work has been done on the annotated streaming model. For instance, the classical statistical problems like selection, frequency moments, and frequency based functions were considered in~\cite{ChakrabartiCGT14,ChakrabartiCMT14,Ghosh20}. Graph problems have also been studied extensively in this model, including maximum matching, minimum weight spanning tree, and triangle counting~\cite{ChakrabartiCMT14,CormodeMT13,Thaler16,ChakrabartiG19,ChakrabartiGT20}. 
A more general model of \emph{streaming interactive proof} has been considered as well in~\cite{ChakrabartiC0TV19, CormodeTY11}, and, very recently, \cite{CormodeDGH24}~has initiated the study of zero-knowledge proofs in interactive streaming.

A detailed comparison between annotated streaming and streaming certification is provided in Section~\ref{sec:annotation-vs-certification}.

\paragraph{Communication Complexity.}

Our lower bounds for semi-streaming certification schemes are obtained by reduction from communication complexity, where an efficient streaming certification scheme implies an efficient non-deterministic communication two-party protocol for problems such as set-disjointness ($\disj$). Several prior streaming lower bounds were established using reductions to communication complexity (see, e.g., \cite{VerbinY11,DBLP:conf/icalp/BeraCG20,AbboudCKP21}), but they were not assuming certificates provided by an oracle.
Reductions from communication complexity are also common in proving lower bounds for distributed graph algorithms in the so-called \cgst model \cite{AbboudCKP21,DBLP:Feuilloley,SarmaHKKNPPW12,DBLP:CzumajK20,GrossmanKP20,HolzerW12}. In this setting, a few
lower bounds for non-deterministic variants (e.g., proof labeling schemes, distributed interactive proofs, etc.) were obtained by using reductions to non-deterministic communication complexity~\cite{Censor-HillelPP20,GS16}, but many graph problems were considered in~\cite{AbboudCKP21}, which provides a unified framework for proving lower bounds applying to both streaming algorithms, and distributed algorithms, by reduction to communication complexity.
We build upon their framework, and upon some of their reductions in several cases (for diameter, minimum vertex cover, maximum independent set, and coloring), with the non-determinism constraint imposed by streaming certification.

As far as specific problems are concerned, proper $(d+1)$-coloring, where $d$ denotes the degeneracy of the input graph, was considered in the semi-streaming model~\cite{DBLP:conf/icalp/BeraCG20}.
While we do not consider this problem, their lower bound construction inspired our lower bound for certifying degeneracy.
Computing degeneracy in the semi streaming model was also considered in the past~\cite{Farach-ColtonT14,Farach-ColtonT16}, but we are not aware of any lower bounds for the problem.
Clique detection in the \cgst model was considered in~\cite{DBLP:CzumajK20}, and our lower bound uses a similar gadget.
Similarly, our coloring certification lower bound uses a gadget which was  presented earlier for a similar problem in the \cgst model~\cite{DBLP:Feuilloley},
and our lower bound for diameter uses a gadget already used in the \cgst model~\cite{HolzerW12}.
Finally, our upper bound for maximum matching uses a dual problem of finding a small set of nodes whose removal leaves many odd-sized connected components. 
This idea, known as the Tutte-Berge formula, found applications also in other certification settings, such as
annotated streams~\cite{Thaler16}, and proof labeling schemes~\cite{Censor-HillelPP20}.

\section{Model and Preliminary Results}

The streaming  model consists of a memory-bounded (space-constrained) algorithm that receives its input as a stream of items, processes them in their order of arrival, and, after the stream ends, answer a pre-defined question on the input.

\subsection{Streaming Certification Schemes}

In this work we focus on decision problems, i.e., on streaming algorithms that decide, for a given language, if the input it receives is in that language. For such problems we define the notion of \emph{streaming certification scheme}. A streaming certification scheme  has two components: a \emph{prover}, which is a computationally unlimited but non-trustable oracle, and a \emph{verifier}, which is a \emph{deterministic} streaming algorithm. The streaming algorithm needs not run in polynomial time, although for our upper bounds it would be desirable that it does. For a fixed decision problem~$\prob$, e.g., whether a given graph is 3-colorable, the prover acts as follows. Given an input $x$ for $\prob$, the prover provides the verifier with a \emph{certificate} $c\in\{0,1\}^*$. This certificate depends on the input~$x$, but not on the order of the items in the stream  corresponding to~$x$ (e.g., it does not depend on the order in which the edges of an input graph $G$ are streamed). The certificates is designed by the prover for the verifier, and is stored in a read-only memory that can be accessed by the latter. 
The input~$x$ is given to the verifier as a stream of items that are processed by the verifier, which can also access the certificate~$c$ at will. When the stream ends, the verifier has to return its output, i.e.,  \emph{accept} or \emph{reject}. (We consider solely the insertion-only model, i.e., no items are ever deleted once they appear in the stream.)
The pair prover-verifier is a correct (streaming) certification scheme for~$\prob$ if the following two conditions hold for every possible input $x$ for~$\prob$. 

\begin{description}
    \item[Completeness:] If $x \models \prob$ then there exists a certificate $c\in\{0,1\}^\star$ such that, for every order in which the items of $x$ are given as a stream~$s$, the verifier accepts $(s,c)$.
    \item[Soundness:] If $x\not\models \prob$ then, for every certificate $c\in\{0,1\}^\star$, and for every order in which the items of $x$ are given as a stream~$s$, the verifier rejects $(s,c)$.
\end{description}

For measuring the quality of a (correct) certification scheme (i.e., satisfying both completeness and soundness), we focus on the following complexity measure: 
the size of the certificates provided by the prover, plus the space-complexity of the verifier. 
That is, given a decision problem~$\prob$, for every $n\geq 1$, we are interested in minimizing the following quantity over all prover-verifier pairs:
\[
\max_{\substack{x \models\prob \\ |x|=n}}\; 
\min_{c\, \in \, \mbox{\footnotesize accept}(x)} \;
\max_{\pi  \in\Sigma_n  } \;
(|c|+\mbox{space}(\pi(x),c))
\]
where $\Sigma_n$ denotes the set of permutation over $[n]$, $\pi(x)$ is the stream of $n$ items for $x$ ordered according to~$\pi$,  $\mbox{accept}(x)$ is the set of certificates leading the verifier to accept $x$ for every order $\pi$ in which $x$ can be streamed as input, i.e., 
\[
\mbox{accept}(x)=\{c\in \{0,1\}^\star \mid \forall \pi \in \Sigma_n, \; \mbox{verifier}(\pi(x),c)=\mbox{accept}\},
\]
 and $\mbox{space}(\pi(x),c)$ is the space-complexity (i.e., the memory requirement) of the verifier on the input stream~$\pi(x)$ and certificate~$c$. For instance, if $\prob$ is a graph problem then, for every graph $G=(V,E)$, $\pi(G)$ is a mere ordering of the edges of~$G$. 

\subsection{Annotation vs.\ Certification}
\label{sec:annotation-vs-certification}

Streaming certification can be viewed as a variant of annotated streaming (see Section~\ref{sec:related-work}) \emph{focusing on decision functions}, i.e., functions~$f$ defined as 
\[
f(x)=\left\{\begin{array}{ll}
1 & \mbox{if $x\models \prob$} \\
0 & \mbox{otherwise.}
\end{array}\right.
\]
Moreover, in the context of streaming certification, we deal with prescient schemes only. Furthermore, the whole certificate $c(x)$ is given at once, right before, or together with the fist item $x_1$ of the stream $x=(x_1,\dots,x_m)$. That is, $c(x)=h_1(x)$, and $h_i(x)=\varnothing$ for all $i>1$. A slight difference here is that the certificate depends only of the input (that is of the graph~$G$), and not on the order in which the items (i.e., the edges) are streamed. Finally, we consider the semi-streaming model, and we do not insist on polylogarithmic space complexity of the algorithm. 
However, completeness and soundness conditions of streaming certification are strengthened compared to annotated streaming: They must hold with probability~1, i.e., we focus on \emph{deterministic} verifier algorithms~$\mathcal{A}$. In other words, one does not need to amplify the outcome of the verifier for being convinced of the correctness of the answer (e.g., by performing several passes of the stream). If $f(x)=1$ then the certificate provided by the prover must lead the verifier to systematically accept, and if $f(x)=0$ then, whatever certificate is provided by the prover, the verifier must  systematically  reject. 

Overall, for decision functions, streaming certification schemes can be thought of as an equivalent of~$\NP$ but for streaming. Instead, annotation schemes can be thought of as an equivalent of $\MA\cap \text{co-}\MA$ for streaming\footnote{Recall the MA stands for Merlin-Arthur, i.e., the verifier is randomized, and is allowed to err with small probability.}. Indeed, not only annotation schemes require that, for an input $x$ such that $f(x)=1$ (i.e., $x\models \prob$), the oracle provides the (randomized) verifier with an annotation leading it to output~1 (i.e., \emph{accept}), but also, for an input $x$ such that $f(x)=0$ (i.e., $x\not\models \prob$), the oracle must provide the (randomized) verifier with an annotation leading it to output~0 (i.e., \emph{reject}). It is solely when the annotation is not the one expected by the verifier for the given input stream that it is allowed to output~$\bot$. Nevertheless, we show that the weaker ``$\NP$-like'' streaming certification schemes are sufficient to certify many graph properties in the semi-streaming model.

\subsection{Graph Properties and Semi-Streaming}

In the framework of deciding graph property, the input is an $n$-node graph given as a stream of edges. 
The set $[n]=\{1,\ldots,n\}$ of nodes is known in advance, and each edge is given as a pair of nodes, i.e., a pair of integers in $[n]\times [n]$.
We are primarily interested in streaming algorithms that have linear space, up to polylogarithmic factors,
which is usually referred to as \emph{semi-streaming}. 
That is, the verifier has an upper bound on memory of $O(n\; \mbox{polylog} \; n)$ bits for $n$-node graphs, allowing the algorithm to store a poly-logarithmic number of edges (whereas the number of edges may be quadratic).
Let us exemplify the notion of certification in this context.  

\paragraph{Example.} 

Let $\prob$ be the decision problem asking whether a given graph $G$ is $3$-colorable,
and let us consider the following possible certification scheme for~$\prob$.
For every $3$-colorable $n$-node graph~$G$, the prover provides the algorithm with a certificate $c\in\{0,1,2\}^n$ where, for every $i\in [n]=\{1,\dots,n\}$, the $i$-th entry $c[i]$ of~$c$ is a color assigned to the $i$-th node of~$G$ in a proper 3-coloring of~$G$. The edges of $G$ are then provided to the verifier as a stream, in an arbitrary order.
Upon reception of an edge $\{i,j\}\in [n]\times [n]$, the verifier merely checks that $c[i]\neq c[j]$. 
If this is the case for all edges, then the verifier accepts, and otherwise it rejects.
This elementary scheme is complete, as for every $3$-colorable graph the prover can provide a proper $3$-coloring as a certificate, and the verifier will accept the graph with this certificate.
On the other hand, for any non 3-colorable graph~$G$, every coloring~$c$ of the nodes with colors picked in $\{0,1,2\}$ yields the existence of at least one monochromatic edge $e$.
Hence, for any possible certificate provided by a verifier, the algorithm will reject upon the arrival of $e$ (if it did not already rejected earlier), 
rendering the scheme sound.
The space complexity of this certification scheme is $O(n)$ as $c$ can be encoded on $2n$ bits, and checking whether $c[i]\neq c[j]$ for any given $i$ and $j$ consumes $O(\log n)$ bits of memory. 
As we shall see later in the text, the space complexity of this straightforward certification scheme is actually essentially optimal.

\paragraph{The Problems.}

Given a $n$-node graph $G$ and a positive integer $k$, we consider the following  problems:
\begin{itemize}
    \item \MM. Decide whether every maximum matching of $G$ has size~$k$.
    \item \Degeneracy. Decide whether $G$ is $k$-degenerate.
    \item \Diam. Decide whether diameter of $G$ is~$k$.
    \item \kCol. Decide whether the chromatic number of $G$ is~$k$.
    \item  \MinCover,\ \MaxIS\ and \MaxClique. Decide whether $G$ has, respectively, minimum vertex cover, maximum independent set, and maximum clique of size~$k$.
\end{itemize}
In fact, we shall consider variants of these problems, depending on whether one questions equality to~$k$, or at least~$k$, or at most~$k$. 

\paragraph{Notation.}

For a graph parameter~$\rho$ (e.g., chromatic number, diameter, dominating number, etc.), let $\textsc{Rho}$ be the corresponding decision problem, i.e., the problem that takes as input a graph~$G$ and an integer~$k\geq 0$ that we refer to as the \emph{threshold}, and asks whether $\rho(G)=k$. We distinguish the case where the threshold $k$ is given as input from the case where it is fixed. 
\begin{itemize}
\item Input threshold: In this framework, we shall consider the decision problems $\textsc{Rho}_{=}$, $\textsc{Rho}_{\leq}$, and $\textsc{Rho}_{\geq}$, which take as input a graph~$G$ and a threshold~$k\in\mathbb{N}$, and respectively ask whether $\rho(G)= k$, $\rho(G)\leq k$ or $\rho(G)\geq k$. In the context of streaming, the threshold $k$ is the first item received by the algorithm, before it receives the first edge of the input graph~$G$. 

\item Fixed threshold: In this framework, for any given integer $k\geq 1$, we shall consider the decision problems $\textsc{Rho}_{= k}$, $\textsc{Rho}_{\leq k}$ and $\textsc{Rho}_{\geq k}$, which take as input a graph~$G$, and respectively ask whether $\rho(G)= k$, $\rho(G)\leq k$, or $\rho(G)\geq k$. That is, the threshold $k$ is fixed, and the algorithm might be designed for this specific threshold. 
\end{itemize}

\medskip

The certificate size as well as the memory requirement of the algorithm will be given as a function of~$k$ and~$n$, and the big-$O$ and big-$\Omega$ notations hide absolute constants, i.e., independent of~$k$ and~$n$. Some of our lower bounds hold for constant $k$ (i.e., for fixed parameter), while other cases are only proved for $k$ which grows with~$n$ (i.e., for input parameter). The upper bounds, on the other hand, are always given as a function of both $n$ and $k$. The following is a straightforward, but important remark regarding streaming algorithms (without certificates).

\begin{lemma}\label{lemma:equals_k_to_atleast_k}
	Let $\rho$ be a graph parameter, and let $k\geq 0$ be an integer. If there exists a streaming algorithm for deciding $\rho(G)\leq k$, and a streaming algorithm for deciding $\rho(G)\leq k-1$, both with space complexity $s(n)$ bits on $n$-node graphs~$G$, then there exists a streaming algorithm for $\rho(G)= k$ with space complexity $O(s(n))$ bits. The same holds under the assumption that there exist streaming algorithms for deciding $\rho(G)\geq k$ and for deciding $\rho(G)\geq k+1$, both with space complexity $s(n)$ bits.
\end{lemma}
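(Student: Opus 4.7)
The plan is to observe that $\rho(G)=k$ is the conjunction of $\rho(G)\le k$ and $\rho(G)\ge k$, and that the latter is simply the negation of $\rho(G)\le k-1$. So, given the two hypothesised streaming algorithms, call them $\mathcal{A}_{\le k}$ and $\mathcal{A}_{\le k-1}$, I would build a composite streaming algorithm $\mathcal{A}_{=k}$ that simulates both of them in parallel on the same edge stream.

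Concretely, $\mathcal{A}_{=k}$ maintains the internal memory of $\mathcal{A}_{\le k}$ and of $\mathcal{A}_{\le k-1}$ side by side. Each time an edge $e$ arrives in the stream, $\mathcal{A}_{=k}$ feeds $e$ to both simulated algorithms, updates their two states according to their respective transition rules, and then discards $e$. At the end of the stream, $\mathcal{A}_{=k}$ queries the final answers of the two simulations and accepts if and only if $\mathcal{A}_{\le k}$ accepts and $\mathcal{A}_{\le k-1}$ rejects. Correctness follows immediately from the equivalence $\rho(G)=k \Longleftrightarrow \bigl(\rho(G)\le k\bigr) \wedge \bigl(\rho(G)>k-1\bigr)$. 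The space used at any point is $s(n)+s(n)+O(1)=O(s(n))$ bits: one copy of the memory of each simulated algorithm, plus a constant number of bits to combine their answers at the end. Note that the simulation works against any adversarial ordering of the edges, because both $\mathcal{A}_{\le k}$ and $\mathcal{A}_{\le k-1}$ are, by assumption, correct on every ordering.

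The argument for the $\ge k$ variant is symmetric: given algorithms for $\rho(G)\ge k$ and $\rho(G)\ge k+1$, run both in parallel on the stream and accept iff the first accepts and the second rejects, using the equivalence $\rho(G)=k \Longleftrightarrow \bigl(\rho(G)\ge k\bigr)\wedge \bigl(\rho(G)<k+1\bigr)$. Space usage is again $O(s(n))$ bits.

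There is no real obstacle here: the only thing to be careful about is that both simulated algorithms read the stream \emph{in the same order}, which is automatic since the composite algorithm simply delivers each arriving edge to both before moving on to the next edge. No extra pass or buffering is required, so the construction fits the single-pass, adversarial-order streaming model used throughout the paper.
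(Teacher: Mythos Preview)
Your proposal is correct and follows exactly the same approach as the paper: both rely on the equivalence $(\rho(G)=k)\equiv(\rho(G)\le k)\wedge\neg(\rho(G)\le k-1)$ (and its symmetric $\ge$ version). The paper's proof is terser, leaving the parallel simulation implicit, whereas you spell out the mechanics of running the two algorithms side by side; but there is no substantive difference.
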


\begin{proof}
The results follow from the simple observation that 
\[
(\rho(G)=k) \; \equiv \; (\rho(G)\leq k) \land \neg(\rho(G)\leq k-1), 
\]
and, similarly,  $(\rho(G)=k) \; \equiv \; (\rho(G)\geq k) \land \neg(\rho(G)\geq k+1)$.     
\end{proof}

We complete this section by establishing a series of space lower bounds motivating the use of streaming certification schemes for several central graph theoretical problems.  

\begin{theorem}
\label{theo:lower-bounds-decision}
    All the following problems require $\Omega(n^2)$ bits of memory to be decided by a (possibly randomized) streaming algorithm: 
    $\MM_{\leq}$ and  $\MM_{\geq}$, 
    $\Degeneracy_{\leq}$ and  $\Degeneracy_{\geq}$, 
    $\MinCover_{\leq}$ and $\MinCover_{\geq}$, 
    $\MaxIS_{\leq}$ and $\MaxIS_{\geq}$,
    $\MaxClique_{\leq}$ and $\MaxClique_{\geq}$, and $\kCol_{\geq}$.
    Furthermore, $\Diam_{\leq 2}$, $\Diam_{\geq 3}$, and
    $\kCol_{\leq 3}$ also require $\Omega(n^2)$ bits of memory to be decided by a (possibly randomized) streaming algorithm.
\end{theorem}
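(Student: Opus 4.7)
The plan is to reduce, for each problem, from the two-party communication complexity of set-disjointness $\disj_N$, which has randomized lower bound $\Omega(N)$. For every listed graph parameter $\rho$, I would encode an instance $(x,y) \in \{0,1\}^N \times \{0,1\}^N$ with $N = \Theta(n^2)$ as a graph $G(x,y)$ on $\Theta(n)$ vertices whose edges split into an ``Alice part'' $E_A(x)$, a ``Bob part'' $E_B(y)$, and a fixed part depending only on $n,k$, with the gap property that $\rho(G(x,y))$ takes one of two values (differing by $1$) according to $\disj(x,y)$. Any streaming algorithm using $s$ bits of memory then induces a communication protocol of cost $O(s)$ by the standard simulation: Alice processes $E_A(x)$ through the algorithm, transmits the memory state to Bob, who continues the stream with $E_B(y)$ and outputs the decision. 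A single reduction therefore implies the $\Omega(n^2)$ bound simultaneously for the $\leq k$ and $\geq k$ variants, once $k$ is placed on either side of the gap.

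For $\MM_{\leq}$ and $\MM_{\geq}$ this is directly the perfect-matching lower bound of Feigenbaum et al.~\cite{FeigenbaumKMSZ05} with threshold $k = n/2$. For $\MaxIS$, $\MaxClique$ and $\MinCover$ I would use a single disjointness encoding in which each coordinate of $(x,y)$ corresponds to a potential edge inside a candidate independent set or clique; the identities $\alpha(G) = n - \tau(G) = \omega(\overline{G})$ then transport the bound (on possibly complemented graphs) to all three parameters and to both threshold directions. For $\Diam_{\leq 2}$ and $\Diam_{\geq 3}$ I would adapt the gadget of~\cite{HolzerW12}: a graph of diameter at most $3$ in all cases, containing two designated vertices whose distance collapses from $3$ to $2$ exactly when $\disj(x,y)$ fails. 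For $\kCol_{\leq 3}$ I would invoke the $3$-coloring gadget in the spirit of~\cite{DBLP:Feuilloley, AbboudCKP21}, and for $\kCol_{\geq}$ use clique-embedding gadgets of~\cite{DBLP:CzumajK20}. For $\Degeneracy_{\leq}$ and $\Degeneracy_{\geq}$ I would use the construction from~\cite{DBLP:conf/icalp/BeraCG20}, which produces a graph whose degeneracy varies by $1$ with $\disj(x,y)$.

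The main technical obstacle is not the reduction framework, which is by now standard, but verifying for each parameter that the gadget simultaneously (i)~admits a clean $E_A(x) / E_B(y)$ partition so that Alice's and Bob's edges can indeed be streamed separately, and (ii)~yields a gap of exactly $1$ centered on a threshold $k = k(n)$, so that the same construction lower-bounds both the $\leq k$ and the $\geq k$ variant. For the families where the bound is claimed only for the \emph{input-threshold} formulation (most entries), placing $k$ between the two gap values suffices; for $\Diam_{\leq 2}$, $\Diam_{\geq 3}$, and $\kCol_{\leq 3}$, which are claimed for a specific constant, I would ensure the gadget is designed with $k\in\{2,3\}$ so that the constant-threshold version inherits the bound.
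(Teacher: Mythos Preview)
Your plan is broadly correct and overlaps with the paper's argument in that every bound ultimately comes from a disjointness reduction, but the paper organizes the argument differently and this buys it something you are missing.

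The paper's key device is Lemma~\ref{lemma:equals_k_to_atleast_k}: if both $\textsc{Rho}_{\leq k}$ and $\textsc{Rho}_{\leq k-1}$ were easy, then $\textsc{Rho}_{=k}$ would be easy. Hence a single lower bound for the \emph{equality} version immediately forces at least one of the two inequality versions to be hard, and the paper simply cites existing $\Omega(n^2)$ bounds for $\MM_{=}$, $\Degeneracy_{=}$, $\MinCover_{=}$, $\MaxIS_{=}$, $\MaxClique_{=}$, $\Col_{=3}$, $\Diam_{=2}$. For the fixed-threshold statements the paper combines this with the observation that the \emph{adjacent} threshold is \emph{easy}: $\Diam_{\leq 1}$ and $\Diam_{\geq 2}$ reduce to clique-counting, and $\Col_{\leq 2}$ is bipartiteness, so the hardness of the ``$=$'' version pins down which side of the inequality is hard. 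Your plan instead asks each gadget to realize a gap of exactly~$1$ at the desired constant; this also works, but it is a stronger demand on the construction and is unnecessary once you have Lemma~\ref{lemma:equals_k_to_atleast_k}.

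One point in your plan does not go through as stated: for $\kCol_{\geq}$ you propose the clique-embedding gadgets of~\cite{DBLP:CzumajK20}. Those gadgets control $\omega(G)$, and $\omega$ only lower-bounds $\chi$; in the ``no $K_4$'' case you have no control on $\chi$ from above, so the construction does not separate $\chi\geq k$ from $\chi<k$. The paper avoids this by arguing from $\Col_{=3}$: since $\Col_{=3}$ is hard, Lemma~\ref{lemma:equals_k_to_atleast_k} forces at least one of $\Col_{\geq 3}$ or $\Col_{\geq 4}$ to be hard, which suffices for $\kCol_{\geq}$ with input threshold. If you want to keep your direct-gadget approach, use the $3$-coloring gadget itself (chromatic number exactly $3$ versus at least $4$), not the clique gadget.
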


\begin{proof}
Thanks to Lemma~\ref{lemma:equals_k_to_atleast_k}, it is sufficient to show the claimed lower bounds for equality to the threshold~$k$. 
For example, it has been shown in~\cite{FeigenbaumKMSZ05} that any (possibly randomized) streaming algorithm deciding whether the $n$-node input graph $G$ has a perfect matching (i.e., a matching of size $k=n/2$) has space complexity $\Omega(n^2)$ bits. Therefore, $\MM_{=}$ requires $\Omega(n^2)$ space, and  thus  $\MM_{\leq}$ and $\MM_{\geq}$ as well. 

Similarly, it has been shown in~\cite{DBLP:conf/icalp/BeraCG20} that distinguishing $n$-node graphs of degeneracy~$k$ from those with chromatic number $k+2$ (which have degeneracy at least $k+1$) requires $\Omega(n^2)$ space in the randomized streaming model. Therefore, $\Degeneracy_{=}$ requires $\Omega(n^2)$ space.

Finally, it has been proved in~\cite{AbboudCKP21} that any (possibly randomized) streaming algorithms for  $\MinCover_{=}$, $\MaxIS_{=}$, or $\MaxClique_{=}$ require $\Omega(n^2)$ space.

We now move on with the case where the threshold $k$ is fixed (and not part of the input). 
The lower bound graphs from~\cite{HolzerW12} to prove lower bounds for \Diam in distributed settings also imply a $\Omega(n^2)$ lower bound for $\Diam_{=2}$ in streaming (see Lemma~\ref{lem:family-of-streaming-lower-bound} and Theorem~\ref{theo:diam_lb} for more details). This implies that $\Diam_{\leq 2}$ requires $\Omega(n^2)$ space since $\Diam_{\leq 1}$ merely requires $O(\log n)$ bits for counting the edges to check that the graph is a clique (of $n(n-1)/2$ edges). Similarly, $\Diam_{\geq 3}$ requires $\Omega(n^2)$ space because $\Diam_{\geq 2}$ is solvable by an algorithm with $O(\log n)$ space.  

For coloring, it has been proved in~\cite{AbboudCKP21} that any (possibly randomized) streaming algorithms for $\Col_{=3}$ require $\Omega(n^2)$ space. Therefore, $\Col_{\leq 3}$ requires $\Omega(n^2)$ space as $\Col_{\leq 2}$ (i.e., bipartiteness) can be decided with $O(n\log n)$ space by maintaining a spanning forest~\cite{McGregor14}. Thanks to Lemma~\ref{lemma:equals_k_to_atleast_k}, at least one of the two decision problems $\Col_{\geq 3}$ or $\Col_{\geq 4}$ also requires $\Omega(n^2)$ space, and so does $\Col_{\geq}$. 
\end{proof}

\section{Semi-Streaming Certification}

In this section, we consider central graph parameters
that are of specific interest in the context of certification schemes as the corresponding decision problems cannot be solved in the  semi-streaming framework (cf. Theorem~\ref{theo:lower-bounds-decision}).
That is, we are now investigating whether there is a way to \emph{certify} solutions for these problems in a semi-streaming manner. 
Interestingly, as we shall see in this section, for maximum matching and degeneracy, both directions (i.e., ``at most'' and ``at least'') can be certified in semi-streaming. 
For the other considered graph parameters, such a semi-streaming certification exists solely for one direction.  

\subsection{Maximum Matching}

For maximum matching, we show the following. 

\begin{theorem}\label{theo:MM:upper-bound}
    There exists a semi-streaming certification schemes for $\MM$. Specifically: 
    \begin{itemize}
        \sloppy
        \item There exist a certification scheme for $\MM_{\geq}$  with certificate size $O(k\log n)$ bits, and a verifier's space complexity of $O(\log n)$ bits for all $n$-node graphs and threshold $k \geq 1$.
        \item There exist a certification scheme for $\MM_{\geq}$ with certificate size $O(n\log \Delta)$ bits, and verifier's space complexity of $O(n)$ bits for all $n$-node graphs of  maximum degree~$\Delta$.
        \item There exists a certification scheme for $\MM_{\leq}$ with certificate size $O(n)$ bits, and verifier's space complexity is $O(n\log n)$ bits  for all $n$-node graphs.
    \end{itemize}
\end{theorem}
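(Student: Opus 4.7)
My plan is to prove the three items of Theorem~\ref{theo:MM:upper-bound} via three separate schemes, each relying on a different idea.

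\emph{First scheme} ($c = O(k \log n)$, $m = O(\log n)$ for $\MM_{\geq}$). The prover writes down the $k$ matching edges explicitly, each as a pair of vertex IDs, giving a certificate of $O(k \log n)$ bits. Since the certificate sits in read-only memory that can be scanned repeatedly using only $O(\log n)$ working memory, the verifier first confirms by pairwise comparison that the $2k$ listed endpoints are distinct (so the edges form a matching). It then maintains a single counter $t$ which it increments each time an incoming stream edge $(u,v)$ matches one of the listed edges. Because distinct stream edges cannot both match the same certificate edge in the insertion-only model, $t$ reaches $k$ if and only if every listed edge appears in $G$; the verifier accepts iff $t = k$. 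Completeness and soundness are immediate.

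\emph{Third scheme} ($c = O(n)$, $m = O(n \log n)$ for $\MM_{\leq}$). Here I invoke the Tutte-Berge formula: $\nu(G) \leq k$ iff there exists $U \subseteq V$ with $\mathrm{oc}(G - U) - |U| \geq n - 2k$, where $\mathrm{oc}(H)$ denotes the number of odd-sized connected components of $H$. The prover sends $U$ as an $n$-bit indicator bitmap. The verifier keeps a union--find data structure on $V \setminus U$ (parent pointers and subtree-size counters, $O(n \log n)$ bits), unioning the endpoints of each stream edge $(u,v)$ with $u, v \notin U$. When the stream ends, it reads off the component sizes, counts the odd ones, and accepts iff $\mathrm{oc}(G - U) - |U| \geq n - 2k$. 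Completeness and soundness reduce directly to Tutte-Berge.

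\emph{Second scheme} ($c = O(n \log \Delta)$, $m = O(n)$ for $\MM_{\geq}$). This is the subtlest case: when $\Delta \ll n$, per-edge encoding with $\log n$-bit IDs is wasteful, and we must verify the matching using only $O(n)$ verifier memory. The key idea is to encode the matching by a vertex coloring whose monochromatic edges are exactly the matching. Concretely, the prover picks a matching $M$ of size $\geq k$, contracts $M$ in $G$ to obtain $G/M$ (which has maximum degree at most $2\Delta$), greedily produces a proper $(2\Delta+1)$-coloring of $G/M$, and lifts it to $c \colon V \to [2\Delta + 1]$ by assigning both endpoints of each $M$-edge the color of their shared super-vertex. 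The certificate is $c$, occupying $n \lceil \log(2\Delta + 1) \rceil = O(n \log \Delta)$ bits. By construction, an edge of $G$ is monochromatic under $c$ iff it lies in $M$. The verifier stores one ``already matched'' bit per vertex together with a counter $t$; on each stream edge $(u,v)$ with $c(u) = c(v)$, it rejects if either flag is already set and otherwise sets both flags and increments $t$. It accepts iff $t \geq k$ at the end. Completeness follows because our coloring makes the monochromatic edges exactly $M$. For soundness, observe that the set of edges counted by $t$ always forms a matching of $G$, so $t \geq k$ forces $\nu(G) \geq k$.

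The main obstacle is the second scheme, where a naive approach would force $\Omega(n \log \Delta)$ verifier memory to keep per-vertex counters tracking which incident edge is the matching one. The coloring trick circumvents this entirely by reducing the matching-identification problem to a local test of color equality, which needs only one bit per vertex plus a small counter.
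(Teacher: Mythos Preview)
Your proposal is correct and follows essentially the same approach as the paper in all three parts: the explicit matching certificate for the first scheme, the Tutte--Berge witness for the third scheme (the paper maintains a spanning forest rather than a union--find structure, but these are equivalent here), and a vertex coloring whose monochromatic edges form the matching for the second scheme, verified with per-vertex flag bits. The only cosmetic difference is that you build the coloring by contracting $M$ and greedily coloring $G/M$ with $2\Delta+1$ colors, whereas the paper directly constructs a $(2\Delta-1)$-coloring by greedily coloring matched pairs first; both give $O(n\log\Delta)$-bit certificates and the verifier is identical.
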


The proof is split in two parts. 
We first show that certifying the existence of a matching of size at least~$k$ can be done in semi-streaming,
with certificate size $O(n\log \Delta)$ bits or $O(k\log n)$ bits (cf. lemmas~\ref{lem:MM-at-least-k} and~\ref{lem:another-MM-at-least-k}).
Next, we show that certifying the non-existence of a matching of size larger than~$k$  can also be done in semi-streaming (cf. Lemma~\ref{lem:MM-at-most-k}). 
We start with the simpler, perhaps obvious, scheme.

\begin{lemma}\label{lem:MM-at-least-k}
     There exists a semi-streaming certification scheme for $\MM_{\geq}$. Specifically, the certificate size of the scheme is $O(k\log n)$ bits, and the space complexity of the algorithm is $O(\log n)$ bits.      
\end{lemma}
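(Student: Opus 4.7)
The plan is to have the prover spell out an explicit matching of size $k$ in $G$, and to have the verifier check two things: that the listed edges are pairwise vertex-disjoint, and that each of them really occurs in the input stream. Since every edge is a pair in $[n]\times [n]$, listing $k$ of them costs $O(k\log n)$ bits, matching the target certificate size. I will additionally ask the prover to write each edge with its smaller endpoint first and to list the $k$ edges in lexicographic order; if the certificate violates this canonical form the verifier will reject immediately, so we may assume in the analysis that it does not.

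The verifier's work will be split into a pre-stream phase and a streaming phase. In the pre-stream phase, the verifier treats the certificate as read-only memory and runs a nested double loop over the $2k$ endpoints of the alleged matching to test that they are pairwise distinct, keeping in working memory only two pointers of size $O(\log k)$ into the certificate and a few $O(\log n)$-bit registers needed to compare endpoints. During the streaming phase, the verifier will maintain a single counter $c$, initialized to $0$. On receiving each edge $e$ in the stream it performs a binary search on the (already validated and sorted) certificate and, if $e$ is found there, increments $c$. After the stream ends it accepts iff $c=k$. The binary search uses only $O(\log k)$ extra working memory for its indices, so the total verifier space is dominated by the counter and a few search/comparison registers, i.e., $O(\log n)$ bits, as claimed.

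Completeness is immediate: on any $G$ with a matching of size at least $k$, the prover supplies such a matching in canonical form, the disjointness test passes, and each listed edge appears exactly once in the stream (regardless of adversarial ordering), so $c$ reaches exactly $k$ and the verifier accepts. For soundness, suppose the verifier accepts. Then the pre-stream check certifies that the $k$ certificate edges are pairwise vertex-disjoint; and the final value $c=k$ forces each of those $k$ edges to have been observed in the stream, hence to belong to $E(G)$. Consequently $G$ contains $k$ pairwise vertex-disjoint edges, proving $G \models \MM_{\geq k}$.

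There is no substantive obstacle here; the scheme is essentially the streaming analogue of an $\NP$ witness for a matching. The only mild subtlety worth flagging is why a single $O(\log n)$-bit counter suffices in place of a $k$-bit ``seen'' vector: because the certificate has only $k$ distinct entries and the insertion-only model delivers each edge at most once, every certificate edge contributes at most $1$ to $c$, so $c=k$ holds if and only if all $k$ listed edges appear in the stream.
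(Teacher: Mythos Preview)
Your proof is correct and follows essentially the same approach as the paper: the certificate is an explicit matching of size $k$, the verifier first checks vertex-disjointness using only pointers into the read-only certificate, and then counts how many streamed edges appear in the certificate, accepting iff the count reaches~$k$. Your additional care in enforcing a canonical sorted form (enabling binary search) and in justifying why a single counter suffices under the no-repetition assumption of the insertion-only model are refinements the paper leaves implicit, but the underlying scheme is identical.
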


\begin{proof}
Let $G=(V,E)$ be a graph for which the  maximum matching has size at least~$k$. Let $M$ be a (maximum) matching of $G$ of size~$k$. 
The prover gives $M$ as certificate to the verifier, as a set of $k$ edges, encoded using $O(k\log n)$ bits. Given the certificate $M$, the verifier checks whether (1)~$M$ is a matching and has size~$k$, and (2)~$M\subseteq E$. Condition~(1) is using $M$ only (i.e., not the stream), and checking it requires counting the number of edges in $M$ and exhaustively going over $M$ for each $\{u,v\}\in M$ to check if $u$ and $v$ appear exactly once in $M$, which requires $O(\log n)$ bits. Checking condition~(2) requires counting the number of edges   $\{x,y\}$,  appearing in the stream, are included  in the certificate  $M$. If there are $k$ such streamed edges, then the verifier concludes that $M\subseteq E$. This again requires only $O(\log n)$ bits. If conditions~(1) and~(2) are satisfied, then the verifier accepts, otherwise it rejects. 

Completeness is satisfied by construction. 
For establishing soundness, let $G$ be a graph with maximum matching size smaller than~$k$. For every set $M$ of $k$ edges that may be given as a certificate, at least one of the two conditions~(1) and~(2) checked by the verifier fails, thus the verifier rejects as desired. 
\end{proof}
    
\begin{lemma}\label{lem:another-MM-at-least-k}
    There exists a semi-streaming certification scheme for $\MM_{\geq}$. Specifically, in $n$-node graphs with maximum degree~$\Delta\geq 2$, the certificate size of the scheme is $O(n\log \Delta)$ bits, and the space complexity of the algorithm is $O(n)$ bits. 
\end{lemma}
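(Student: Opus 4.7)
The plan is to encode the matching not as an explicit list of edges but as a short per-vertex \emph{color} that lets the verifier recognize matching edges locally as they arrive in the stream; this avoids the $\Omega(\log n)$-bits-per-edge overhead of Lemma~\ref{lem:MM-at-least-k}.

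Fix a matching $M$ of size at least $k$ that the prover wants to certify, and let $S\subseteq V$ be the set of vertices saturated by~$M$. I would introduce the auxiliary graph $H'$ on vertex set $S$ whose edges are precisely the edges of $G\setminus M$ with both endpoints in $S$, and then contract each edge of $M$ to obtain a graph $\widetilde H$ whose super-vertices are in bijection with the edges of~$M$. Since each saturated vertex has at most $\Delta-1$ non-matching neighbors in $G$, every super-vertex of $\widetilde H$ has degree at most $2(\Delta-1)$, so a greedy algorithm properly colors $\widetilde H$ with at most $2\Delta-1$ colors. The certificate is then the vector $(t(v))_{v\in V}$, where $t(v)=0$ if $v\notin S$ and otherwise $t(v)\in\{1,\dots,2\Delta-1\}$ is the color of the super-vertex containing $v$. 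This uses $O(\log\Delta)$ bits per vertex for a total of $O(n\log\Delta)$ bits, and it enjoys the crucial property that, for every edge $\{u,v\}\in E$, one has $t(u)=t(v)\neq 0$ if and only if $\{u,v\}\in M$ (matched endpoints lie in the same super-vertex; two adjacent saturated vertices not matched to one another lie in distinct super-vertices adjacent in $\widetilde H$ and hence get different colors; an edge with an unsaturated endpoint has one zero coordinate).

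The verifier reads $t$ from read-only memory and maintains an $n$-bit array $a[\cdot]$ flagging which vertices have already appeared inside a ``color-matched'' edge, together with a counter $c$ capped at~$k$. Upon receiving a streamed edge $\{u,v\}$ with $t(u)=t(v)\neq 0$, it rejects whenever $a[u]$ or $a[v]$ is already set, and otherwise sets $a[u]=a[v]=1$ and increments~$c$. It accepts iff $c\ge k$ at the end of the stream. The total working memory is $n+O(\log n)=O(n)$ bits.

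Completeness is immediate: under the intended certificate, the color-matched edges are exactly the edges of~$M$, no vertex appears in two of them, and $c$ reaches $|M|\ge k$. For soundness, note that for any certificate $t$ the set $M'=\{\{u,v\}\in E : t(u)=t(v)\neq 0\}$ is a subset of $E(G)$, and the verifier accepts only when $M'$ is itself a valid matching of $G$ of size at least $k$; this is impossible whenever the maximum matching of $G$ has size strictly less than~$k$. The step that requires the most care is the degree bound on $\widetilde H$ that enables the $O(\Delta)$-color greedy coloring; once this is in place, the per-vertex budget drops from $\log n$ to $\log\Delta$ and the verifier's bookkeeping reduces to a single bit per vertex, yielding the announced bounds.
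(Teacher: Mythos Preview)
Your proof is correct and follows essentially the same approach as the paper: encode a matching via a vertex coloring with $O(\Delta)$ colors so that monochromatic edges are precisely the matching edges, then have the verifier flag vertices and count monochromatic edges. The only cosmetic differences are that you build the coloring by contracting matching edges and greedily coloring the resulting multigraph (and reserve a special color~$0$ for unsaturated vertices), whereas the paper colors the matched pairs directly by a greedy argument on $(N(u)\cup N(v))\setminus\{u,v\}$ and then extends properly to the unsaturated vertices; both routes yield the same $2\Delta-1$ bound and identical verifier logic.
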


\begin{proof}
Let $G=(V,E)$ be a graph with  maximum matching matching of size at least~$k$. The prover produces a coloring $c$ of the vertices of $G$ as a certificate, based on a lemma by Bousquet, Feuilloley, and Zeitoun~\cite[Lemma~28]{DBLP:Feuilloley}. 
The original lemma concerns perfect matchings, and we next state and prove an adaptation of it to arbitrary maximum matchings.

For every $n$-node graph $G=(V,E)$ with maximum degree $\Delta\geq2$, and for every matching $M$ in~$G$, there exists a coloring $c: V\rightarrow \{1,\dots,2 \Delta-1\}$ such that (1)~for every $\{u,v\}\in M$, $c(u) = c(v)$, and (2)~for every $\{u,v\}\in E\smallsetminus M$, $c(u) \neq c(v)$. 
To see why such a coloring exists, order the edges of $M$ arbitrarily as $e_1,\ldots, e_k$, and for every $i\in \{1,\ldots, k\}$, let $V(e_1,\ldots, e_i)$ be the set of vertices incident to the edges $e_1,\ldots,e_i$.
We first color the vertices $V(e_1,\ldots, e_i)$, for $i=1,\ldots, k$:
assume that the vertices $V(e_1,\ldots,e_{i-1})$ have already been colored appropriately (i.e., so that conditions~(1) and~(2) holds in $V(e_1,\ldots,e_{i-1})$), and consider $e_i=\{u,v\}$. 
The vertices $u$ and $v$ are assigned a (same) color that has not been adopted by any vertex in $(N(u)\cup N(v))\smallsetminus \{u,v\}$, where $N(\cdot)$ denotes the set of neighboring vertices of a vertex. 
Such a color exists merely because $|(N(u)\cup N(v))\smallsetminus \{u,v\}|\leq 2\Delta - 2$. 
Once the vertices of $V(e_1,\ldots, e_i)$ have been colored, the remaining vertices of $G$ can be properly colored greedily in an arbitrary order, as each vertex has at most $\Delta$ neighbors which are already colored, and $2\Delta-1\geq \Delta+1$ guarantees that one color is always available. 

So, according to the above, let $c: V\rightarrow \{1,\dots,2 \Delta-1\}$ be a coloring of $G=(V,E)$ such that (1)~for every $v\in V$, $|\{u\in N(v) \mid c(u) = c(v)\}|\leq 1$, and (2)~$|\{u,v\}\in E \mid c(u) = c(v)\}|\geq k$. This coloring is the certificate produced by the prover. The verifier aims at checking that conditions~(1) and~(2) hold. It uses one flag bit ${\rm flag}(i)$ per vertex $i\in [n]$, all initialized to~0. Upon reception of an edge $\{i,j\}$, if $c(i)=c(j)$ then the flag bits of $i$ and $j$ are flipped to~1. If the flag bit of at least one of the vertices is already flipped to~1, then the verifier rejects. At the end of the stream, if $\frac{1}{2} \sum_{i\in [n]}{\rm flag}(i) < k$, then the verifier rejects. Otherwise, it accepts. 

Again, completeness holds by construction. 
For soundness, let $G$ be a graph with maximum matching size smaller than~$k$. Towards a contradiction,  assume that there is a certificate, i.e., a coloring~$c$ such that the verifier accepts. As the verifier accepts, the monochromatic edges w.r.t.~$c$ form a matching whose cardinality is at least~$k$ which  yields a contradiction.

The coloring $c$ can be encoded on $O(n\log \Delta)$ bits. The flags consume exactly $n$ bits, and determining whether $\frac{1}{2} \sum_{i\in [n]}{\rm flag}(i) < k$ consumes $O(\log k)$ bits of space. Therefore, the space complexity of the verifier is  $O(n)$ bits.
\end{proof}

\begin{lemma}\label{lem:MM-at-most-k}
     There exists a semi-streaming certification scheme for $\MM_{\leq}$ with certificate size $O(n)$ bits, and verifier's space complexity $O(n\log n)$ bits.
\end{lemma}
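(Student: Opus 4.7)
The plan is to invoke the Tutte--Berge formula, which expresses the maximum matching size as
\[
\nu(G) \;=\; \tfrac{1}{2}\Bigl(n - \max_{U\subseteq V}\bigl(o(G-U) - |U|\bigr)\Bigr),
\]
where $o(H)$ denotes the number of connected components of $H$ of odd cardinality. Thus $\nu(G)\leq k$ if and only if there is a ``deficient'' set $U\subseteq V$ such that $o(G-U)-|U|\geq n-2k$. This is precisely the object the prover will supply.

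More concretely, on an input graph $G=(V,E)$ with $\nu(G)\leq k$, the prover picks a set $U$ attaining the Tutte--Berge maximum and sends to the verifier the characteristic vector of $U$, encoded as an $n$-bit string; this is the entire certificate, of size $O(n)$. The verifier's task is then to check that the graph $G-U$ (i.e.\ the subgraph induced by $V\setminus U$) has at least $n-2k+|U|$ connected components of odd size. To do this while the edges arrive in an arbitrary order, the verifier maintains a union--find data structure on the vertex set $V\setminus U$, together with the size of each current component at its representative. Upon reading an edge $\{u,v\}$, the verifier ignores it if either endpoint lies in $U$, and otherwise performs the standard union operation, merging component sizes. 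Edges incident to $U$ are simply discarded. This is exactly what is needed to compute a spanning forest of $G-U$ in streaming, and uses $O(n\log n)$ bits. After the stream ends, the verifier iterates over the roots of the union--find structure, counts the number of odd-sized components, and accepts if and only if $o(G-U) - |U| \geq n - 2k$.

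Completeness is immediate from the Tutte--Berge formula: when $\nu(G)\leq k$, a witnessing set $U$ exists, and the verifier correctly recovers $o(G-U)$ from its spanning forest regardless of the edge order. Soundness is equally direct: if $\nu(G)>k$, then for every subset $U\subseteq V$ the Tutte--Berge inequality $o(G-U)-|U|<n-2k$ holds, so whatever $n$-bit vector is provided as the alleged $U$, the final check must fail and the verifier rejects.

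The only genuinely delicate point I anticipate is the space accounting at the verifier. The union--find structure requires storing, for each vertex in $V\setminus U$, a parent pointer of $O(\log n)$ bits, and, at each root, a component size counter of $O(\log n)$ bits, for a total of $O(n\log n)$ bits. Counting odd components at the end and comparing to the threshold $n-2k+|U|$ fits within $O(\log n)$ additional bits. Thus the verifier runs in $O(n\log n)$ space while the certificate is only $n$ bits, matching the claimed bounds.
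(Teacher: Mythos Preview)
Your proposal is correct and follows essentially the same approach as the paper: both use the Tutte--Berge formula, have the prover supply the witnessing set $U$ as an $n$-bit characteristic vector, and have the verifier compute the connected components of $G-U$ in one pass (you phrase it as union--find with size counters, the paper as maintaining a spanning forest, but these are the same idea with the same $O(n\log n)$ space cost). The acceptance criterion you state is an algebraic rearrangement of the paper's Eq.~\eqref{eq:certificate_matching}.
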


\begin{proof}
We describe the prover and verifier of the certification scheme as follows. Let $G=(V,E)$ be a graph whose maximum matching size is at most~$k$. The scheme follows the same line of reasoning as in~\cite{Censor-HillelPP20} and~\cite{ChakrabartiG19}. Recall that the Tutte-Berge formula~\cite{DBLP:BondyM08} states that if a graph $G=(V,E)$ has a maximum matching of size~$k'$ then 
\[
k' = \frac{1}{2}\; \min\limits_{U\subseteq V}\Big(|U| - \mbox{odd}(V \smallsetminus U)+|V|\Big),
\]
where $\mbox{odd}(V \smallsetminus U)$ is the number of connected components with odd number of vertices in $G[V\smallsetminus U]$, i.e., the subgraph of $G$ induced by the vertices in $V \smallsetminus U$. Therefore, there exists a set $U\subseteq V$ such that 
\begin{equation}\label{eq:certificate_matching}
       k \geq \frac{1}{2}\; \Big(|U| - \mbox{odd}(V \smallsetminus U)+|V|\Big)
\end{equation}
The prover gives the set $U$ as a certificate, encoded as an $n$-bit vector.
The verifier maintains a spanning forest $F$ of $G[V\smallsetminus U]$: 
as an edge $\{u,v\}$ streams in, the verifier keeps $\{u,v\}$ if and only if $u,v\notin U$ and it does not induce a cycle in~$F$.
As $F$ has $O(n)$ edges, the space required to store $F$ is $O(n\log n)$ bits (i.e., $O(\log n)$ bits per edge).  Once the stream ends, each tree in the  forest $F$ spans a connected component of $G[V\smallsetminus U]$. The verifier then counts the number of trees with odd number of vertices, accepts if Eq.~\eqref{eq:certificate_matching} holds, and rejects otherwise. 

If  the  maximum matching of $G$ is of size  at most~$k$, then there exists a set $U\subseteq V$ such that Eq.~\eqref{eq:certificate_matching} holds, and completeness follows. For soundness, assume that the size of a maximum matching in $G$ is larger than $k$, yet there exists a certificate $c$ making the verifier  accept $G$. 
This implies that Eq.~\eqref{eq:certificate_matching} holds for the set $U$ encoded in $c$, which, by the Tutte-Berge formula, implies that any maximum matching in~$G$ has size at most $k$, a contradiction.
\end{proof}

Lemmas~\ref{lem:MM-at-least-k}, \ref{lem:another-MM-at-least-k} and~\ref{lem:MM-at-most-k} together prove Theorem~\ref{theo:MM:upper-bound}. We now show that the streaming certification schemes provided in Theorem~\ref{theo:MM:upper-bound} for $\MM_{\geq}$ are essentially optimal, up to polylogarithmic factors. 

\begin{theorem}\label{thm:MM_lb}
    Any streaming certification scheme for $\MM_{\geq}$ using $c$-bit certificates and using $m$ bits of work-memory for $n$-node graphs satisfies $c+m=\Omega(n)$. 
\end{theorem}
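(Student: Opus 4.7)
The plan is to reduce from the two-party set-disjointness problem $\disj_n$, whose non-deterministic communication complexity satisfies $\CCN(\disj_n)=\Omega(n)$ (by the standard fooling-set argument using the pairs $(x,\bar x)$, which force at least $2^n$ distinct $1$-monochromatic rectangles). Given an instance $(x,y)\in\{0,1\}^n\times\{0,1\}^n$ with $x$ held by Alice and $y$ held by Bob, I would build a graph $G(x,y)$ on the vertex set $V=\{u_1,\ldots,u_n,v_1,\ldots,v_n\}$ as follows: Alice inserts the edge $\{u_i,v_i\}$ whenever $x_i=0$, and Bob inserts the edge $\{u_i,v_i\}$ whenever $y_i=0$. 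Then $\{u_i,v_i\}$ is an edge of $G(x,y)$ iff $x_i\wedge y_i=0$, and since these $n$ candidate edges are pairwise vertex-disjoint, the maximum matching of $G(x,y)$ has size exactly $|\{i:x_i\wedge y_i=0\}|$. In particular, $G(x,y)$ admits a matching of size at least $n$ if and only if $x$ and $y$ are disjoint.

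Given any streaming certification scheme for $\MM_{\geq}$ using $c$-bit certificates and $m$ bits of verifier memory, I would convert it into a non-deterministic two-party protocol for $\disj_n$. A public prover, seeing both $x$ and $y$, broadcasts a certificate $\gamma$ of $c$ bits intended for the instance $(G(x,y),\,k=n)$. Alice simulates the verifier on the stream of her edges in some fixed order and sends the resulting $m$-bit memory state to Bob, who resumes the simulation on his edges and outputs the verifier's verdict. Completeness of the scheme guarantees that, whenever $\disj_n(x,y)=1$, some certificate makes the verifier accept in every edge order (in particular the one used here); soundness guarantees that, whenever $\disj_n(x,y)=0$, no certificate ever leads to acceptance. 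Hence the protocol is a valid non-deterministic protocol for $\disj_n$ of total cost $c+m$, which gives $c+m \geq \CCN(\disj_n)=\Omega(n)$. Since $G(x,y)$ has $2n$ vertices, this translates into the desired $\Omega(n)$ lower bound for $n$-node graphs.

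The argument has no serious obstacle; it is a clean instance of the communication-to-streaming reduction framework advertised in the related-work section. The only small points to verify carefully are: (i) the case where $x_i=y_i=0$, in which Alice and Bob both insert the edge $\{u_i,v_i\}$, is harmless because parallel insertions collapse to a single edge; (ii) the adversarial-ordering guarantee of the verifier is inherited because the specific ordering ``Alice's edges then Bob's edges'' is itself a legitimate adversarial stream order; and (iii) Alice and Bob both need the threshold $k$, but $k=n$ is determined by the instance size and thus known to both. The conceptually critical choice is selecting $\disj_n$ as the base problem: its non-deterministic CC is $\Omega(n)$ in the right direction, so that the existence of a large matching (the NP-side of $\MM_{\geq}$) corresponds to the NP-hard side $\disj_n(x,y)=1$ of the communication problem.
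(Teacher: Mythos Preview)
Your argument is essentially correct and is in fact simpler than the paper's. The paper reduces from balanced disjointness $\disj^N_{N/2}$ and builds a bipartite graph on $L\cup R$ with $|L|=|R|=N$: Alice maps her $i$-th element to the edge $\{u_{x_i},v_i\}$ for $i\le N/2$, while Bob maps his $i$-th element to $\{u_{y_i},v_{N/2+i}\}$; disjointness is then equivalent to the existence of a perfect matching. The main thing this more elaborate encoding buys is that Alice's and Bob's edges are always \emph{distinct} (they touch disjoint halves of~$R$), so the resulting stream is a legal simple-graph stream regardless of the instance. Your construction is cleaner and requires only the fooling-set bound for plain $\disj_n$, which is a plus.

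The one point that deserves more care is your item~(i). In the model set up in the paper, the input is a simple graph whose edges are streamed once each; a stream in which $\{u_i,v_i\}$ appears twice is therefore not a priori a legal input, and the verifier's behaviour on such a stream is undefined. Your phrase ``parallel insertions collapse to a single edge'' asserts a semantics the model does not guarantee. This is easy to repair without changing the spirit of your argument: for instance, use $3n$ vertices $u_i,v_i,w_i$ and have Alice insert $\{u_i,v_i\}$ when $x_i=0$ while Bob inserts $\{v_i,w_i\}$ when $y_i=0$; these edge sets are disjoint by construction, and for each $i$ the pair contributes at most one matched edge, so a matching of size $n$ exists iff $x$ and $y$ are disjoint. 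With that tweak your proof goes through cleanly and yields the same $\Omega(n)$ bound.
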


\begin{proof}

The proof is via reduction to the communication problem ${\disj}^N_{N/2}$. In this problem, Alice is given set $x$ and Bob, the set $y$ as inputs respectively, which are subsets of $[N]$ of size $N/2$. The aim is to decide whether $x\cap y =\varnothing$. It was proven in~\cite{RaoY2020} that the non-deterministic communication complexity of ${\disj}_N$ is $\Omega(N)$. The same line of argument extends to non-deterministic complexity of ${\disj}^N_{N/2}$ as well. Let $x = \{x_1,\ldots, x_{N/2}\}$ and $y = \{y_1,\ldots, y_{N/2}\}$. Alice and Bob create a bipartite graph $G = (L\cup R, E )$ as follows. The vertex sets $L$ and $R$ are defined as follows. 
\begin{align*}
	L \coloneqq \{u_1,\ldots, u_{N}\}, R \coloneqq \{v_1,\ldots, v_{N}\}
\end{align*}
Alice and Bob add edges $E_A$ and $E_B$ respectively such that the edge set $E$ satisfies $E_A\cup E_B$. The edge sets $E_A$ and $E_B$ are defined respectively as follows: for each $i\in [N/2]$, $(u_{x_i},v_i)\in E_A$ and $(u_{y_i},v_{N/2+i})\in E_B$. By the construction of $G$, it has $N$ edges and if $x\cap y = \varnothing$, then $G$ is a perfect matching. On the other hand, if $x\cap y \neq \varnothing$, then there exists a vertex $u\in L$ and two distinct vertices $v_i$ and $v_j$ in $R$ such that $i\leq N/2$ and $j>N/2$ and $u$ is adjacent to both $v_i$ and $v_j$.  Since $G$ has only $N$ edges, there exists an isolated vertex in $L$ and therefore, $G$ does not have a perfect matching. Hence, $G$ has a perfect matching if and only if $x\cap y = \varnothing$.  

If there exists a streaming certification scheme for $\MM_{\geq}$ on a $N$-vertex graph having $c$-bit certificate and $m$-bit verifier space satisfying that $c+m = o(N)$, then given an instance $(x,y)$ of $\disj_N^{N/2}$, Alice and Bob create the graph $G$ as discussed above, Alice simulates the certification scheme for $k = N/2$ on the arbitrarily ordered edges of $E_A$, sends the state space of the verifier to Bob and Bob continiues the computation on the remaining edges $E_B$ and accepts if and only if the algorithm decides that $G$ has a perfect matching, thus describing a $o(N)$ non-deterministic communication protocol for $\disj_N^{N/2}$ which leads to a contradiction.     
\end{proof}

\subsection{Degeneracy}

Recall that, for every integer $k\geq 0$, a $k$-degenerate graph is a graph in which every subgraph has a vertex of degree at most~$k$. The \emph{degeneracy} of a graph is the smallest value of~$k$ for which it is $k$-degenerate. Equivalently, a graph is $k$-degenerate if and only if the edges of the graph can be oriented to form a directed acyclic graph with out-degree at most~$k$~\cite{DBLP:BondyM08}. 

\begin{theorem}\label{theo:certif-degeneracy-upper-bounds}
    There exists a semi-streaming certification scheme for $\Degeneracy$. Specifically:
    \begin{itemize} 
        \item There exists a semi-streaming certification scheme for $\Degeneracy_{\leq}$ with certificate size $O(n\log n)$ bits, and verifier's space complexity $O(n\log k)$ bits for all $n$-node graphs and thresholds $k\geq 1$. 
        \item There exists a semi-streaming certification scheme for $\Degeneracy_{\geq}$ with certificate size $\min\{k\log n, n\}$ bits, and verifier's space complexity $O(n\log k)$ bits  for all $n$-node graphs and thresholds $k\geq 1$.  
    \end{itemize}  
\end{theorem}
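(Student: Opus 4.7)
My plan is to handle the two directions independently using complementary classical characterizations of $k$-degeneracy.

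For $\Degeneracy_{\leq}$, I would exploit the ordering characterization: $G$ is $k$-degenerate iff the vertices of $G$ admit an ordering in which every vertex has at most $k$ neighbors appearing later. The prover sends such an ordering, namely a permutation $\pi\colon [n]\to[n]$, as the certificate, which costs $O(n\log n)$ bits. The verifier allocates, for each vertex $v$, a counter $c(v)\in\{0,\dots,k+1\}$ occupying $O(\log k)$ bits, for a total of $O(n\log k)$ work memory. Upon each streamed edge $\{u,v\}$, it increments the counter of whichever endpoint appears earlier in $\pi$, and rejects as soon as any counter exceeds $k$; otherwise it accepts at the end of the stream. Completeness follows by taking $\pi$ to be a degeneracy ordering. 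For soundness, any accepting run induces an acyclic orientation of $G$ (from smaller to larger $\pi$-value) with maximum out-degree at most $k$, which implies that $G$ is $k$-degenerate.

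For $\Degeneracy_{\geq}$, I would use the dual characterization: the degeneracy of $G$ is at least $k$ iff $G$ admits a non-empty induced subgraph $G[S]$ of minimum degree $\geq k$. The prover sends such a witness set $S\subseteq V$, using the shorter of two natural encodings: an $n$-bit indicator vector of $S$, or a list of $|S|\lceil\log n\rceil$-bit vertex identifiers. Since any witness must contain at least $k+1$ vertices, the prover can pick a small witness whenever one exists, yielding the stated $\min\{k\log n, n\}$ certificate bound via the more economical of the two encodings. The verifier allocates one counter $d(v)\in\{0,\dots,k\}$ per vertex of~$S$ (zero elsewhere), using $O(n\log k)$ bits in total. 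Whenever a streamed edge $\{u,v\}$ has both endpoints in $S$, it increments $d(u)$ and $d(v)$ capped at $k$. At the end of the stream it accepts iff $d(v)=k$ for every $v\in S$. Completeness is immediate. For soundness, acceptance witnesses that each vertex of $S$ has at least $k$ neighbors inside $S$, so $G[S]$ realizes minimum degree $\geq k$, and the degeneracy of~$G$ is at least~$k$.

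The combinatorial content is light: both schemes are direct instantiations of classical characterizations of degeneracy. The main technical step to check carefully will be the verifier's memory bookkeeping. In both directions, the counters must be capped (at $k+1$ in the $\leq$ direction, at $k$ in the $\geq$ direction), since once the cap is reached the verifier's decision concerning that vertex is already determined. This cap is what keeps each counter at $O(\log k)$ bits rather than the naive $O(\log n)$, and yields the advertised $O(n\log k)$ verifier space in both directions. Once this is verified, completeness and soundness follow directly from the chosen characterizations, and the certificate sizes are read off the natural encodings of a permutation and of a vertex subset.
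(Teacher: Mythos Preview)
Your proof follows essentially the same route as the paper's: the paper also certifies $\Degeneracy_{\leq}$ via a degeneracy ordering $\pi$, with the verifier maintaining per-vertex counters of later neighbors (implicitly capped so each fits in $O(\log k)$ bits), and certifies $\Degeneracy_{\geq}$ via a witness set $V'$ inducing a subgraph of minimum degree $\geq k$, with per-vertex degree counters inside $V'$ capped at~$k$.

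One point in your $\Degeneracy_{\geq}$ argument deserves care. You justify the $k\log n$ list-encoding bound by observing that ``any witness must contain at least $k+1$ vertices,'' but that is a \emph{lower} bound on $|S|$; to encode $S$ as a list in $O(k\log n)$ bits you would need an \emph{upper} bound $|S|=O(k)$, which you do not establish. In fact no such bound holds in general: for a $3$-regular graph of girth $g$ and threshold $k=2$, every subgraph of minimum degree $\geq 2$ contains a cycle and hence has at least $g$ vertices, and $g$ can be $\Theta(\log n)$. The paper glosses over the same point, simply asserting that the list occupies $k\log n$ bits. The $n$-bit indicator-vector encoding is always available, so the semi-streaming claim and the $O(n)$ certificate bound are unaffected; it is only the sharper $k\log n$ term in $\min\{k\log n,n\}$ that neither your argument nor the paper's actually substantiates.
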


The theorem directly follows from lemmas~\ref{lem:degeneracy-at-most-k-upper-bound} and~\ref{lem:degeneracy-at-least-k-upper-bound} hereafter. 

\begin{lemma}\label{lem:degeneracy-at-most-k-upper-bound}
There exists a semi-streaming certification scheme for $\Degeneracy_{\leq}$ with certificate size $O(n\log n)$ bits, and verifier's space complexity $O(n\log k)$ bits.
\end{lemma}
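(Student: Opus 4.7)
The plan is to use the classical characterization of $k$-degeneracy: a graph $G$ is $k$-degenerate if and only if there exists a linear ordering $\pi\colon V\to[n]$ of its vertices such that every vertex has at most $k$ neighbors appearing \emph{later} in the order. (This is the standard ``smallest-last'' / degeneracy ordering, obtained by iteratively removing a minimum-degree vertex.) Such an ordering is exactly what the prover will hand over as the certificate.

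Concretely, I would let the prover provide a permutation $\pi$ of $V$, encoded as $n$ integers from $[n]$, i.e.\ using $O(n\log n)$ bits. The verifier maintains, for each vertex $v\in V$, a counter $\mathrm{out}(v)\in\{0,1,\dots,k+1\}$, all initialized to~$0$. Upon receiving an edge $\{u,v\}$ from the stream, the verifier reads $\pi(u)$ and $\pi(v)$ from the (read-only) certificate, and increments $\mathrm{out}(u)$ if $\pi(u)<\pi(v)$, or $\mathrm{out}(v)$ otherwise; whenever an incremented counter exceeds $k$, the verifier rejects immediately. If the stream ends without any counter exceeding $k$, the verifier accepts. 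Each counter is capped at $k+1$ and thus uses $O(\log k)$ bits, for a total verifier space of $O(n\log k)$ bits, as required.

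For correctness: completeness follows because if $G$ is $k$-degenerate, the prover can supply a degeneracy ordering, under which each vertex has at most $k$ later neighbors, so no counter ever exceeds $k$ and the verifier accepts regardless of the streaming order. For soundness, suppose the verifier accepts for some certificate $\pi$. Then orienting every edge $\{u,v\}\in E$ from the endpoint of smaller $\pi$-value to the endpoint of larger $\pi$-value yields a DAG (since $\pi$ is a linear order on $V$), in which every vertex has out-degree at most $k$; by the DAG characterization of degeneracy, $G$ is $k$-degenerate.

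I do not foresee a serious obstacle here: the argument is a direct reduction from the known orientation characterization of degeneracy to a per-vertex counting task that the verifier can perform on the fly. The only mild care needed is in the accounting of the verifier's work-memory: we must make sure that (i) the certificate $\pi$ is only accessed as read-only memory and therefore does not count toward $m$, and (ii) the counters are truncated at $k+1$ so that the per-counter cost is $O(\log k)$ rather than $O(\log n)$, giving the claimed $O(n\log k)$ verifier space.
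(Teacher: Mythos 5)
Your proposal is correct and follows essentially the same route as the paper: the certificate is a vertex ordering, and the verifier maintains one capped per-vertex counter of later (equivalently, out-) neighbors, rejecting when a counter exceeds $k$, with soundness via the acyclic-orientation characterization of degeneracy. No issues.
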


\begin{proof}
  An $n$-node graph $G$ with degeneracy at most~$k$ can be acyclically oriented such that the out-degree of each vertex is at most $k$. The prover provides the topological ordering $\pi$ of $V(G)$ of one such acyclic orientation of $G$ as a certificate, which can be encoded on $O(n\log n)$ bits. Given a certificate $\pi$, the verifier uses $n$ counters, one per vertex of~$G$, all initialized to~0. For each edge $\{u,v\}$ that is streamed in, the verifier increments the counter of $\min\{\pi(u),\pi(v)\}$ by~$1$. At the end of the stream, the verifier accepts if and only if no counter exceeds~$k$. 

    Completeness is satisfied by construction. To argue for soundness, let us assume that $G$ has degeneracy larger than~$k$. Therefore, for every acyclic orientation of $G$, there exists a vertex $u$ whose out-degree is greater than $k$. Therefore, for any ordering $\pi$ of its vertices, there exists a vertex $u$ such that $|\{v\in N(u)\ :\ \pi(v) < \pi(u)\}|> k$. Given $\pi$ as a certificate, the counter of~$u$ will exceed~$k$, leading the verifier to reject as desired.  
\end{proof}

\begin{lemma} \label{lem:degeneracy-at-least-k-upper-bound}
There exists a semi-streaming certification scheme for $\Degeneracy_{\geq}$ with certificate size $O(\min\{k\log n, n\})$ bits, and verifier's space complexity $O(n\log k)$ bits. 
\end{lemma}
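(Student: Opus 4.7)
The plan is to design a scheme in which the prover supplies a set $S\subseteq V$ such that $G[S]$ has minimum degree at least~$k$, and the verifier checks this property directly on the stream. Since $G$ has degeneracy at least~$k$, such a set $S$ exists by definition (for instance, the $k$-core of $G$ is non-empty, and any subgraph $G[S']$ of the $k$-core has minimum degree at least~$k$). The prover encodes $S$ using whichever of two encodings is shorter, with a single prefix bit indicating the choice: either an $n$-bit characteristic vector of $S$, or a list of $|S|$ vertex identifiers of total size $O(|S|\log n)$. The bitmask bounds the certificate by $O(n)$ unconditionally, while the list yields $O(k\log n)$ when $S$ can be chosen of size $O(k)$; combining the two gives the $O(\min\{k\log n, n\})$ bound in the statement.

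The verifier will be essentially dual to the verifier of Lemma~\ref{lem:degeneracy-at-most-k-upper-bound}. It maintains, for each vertex $v\in[n]$, a counter $d_v$ capped at $k$ and encoded on $O(\log k)$ bits, for a total of $O(n\log k)$ bits of working memory. On each streamed edge $\{u,v\}$, the verifier uses the certificate to test whether both $u\in S$ and $v\in S$; if so, it increments $d_u$ and $d_v$, capping each at $k$. When the stream ends, the verifier accepts iff $d_v = k$ for every $v\in S$, and otherwise rejects.

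Completeness is immediate: when $S$ is a genuine witness set, each $v\in S$ accumulates at least $k$ increments, so all relevant counters reach $k$. For soundness, suppose instead that $G$ has degeneracy strictly less than $k$. Then no induced subgraph of $G$ has minimum degree at least~$k$, so for every set $S$ encoded in a candidate certificate there must exist some $v\in S$ with strictly fewer than $k$ neighbors in $S$. The counter $d_v$ then stays below $k$ throughout the stream, causing the verifier to reject.

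The main subtlety is meeting the $O(k\log n)$ branch of the $\min$: the prover must locate a small witness whenever one exists, and revert to the bitmask encoding otherwise. The verifier's algorithm is uniform across both branches, and only its method for querying membership in $S$ depends on the prefix bit. A minor technical point is that membership queries against the certificate must be implementable within the stated $O(n\log k)$ verifier memory; both encodings allow constant-time bit access to a read-only certificate, so the verifier never needs to copy $S$ into working memory.
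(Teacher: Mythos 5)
Your scheme is essentially the paper's own proof: the certificate is a vertex set $V'$ inducing a subgraph of minimum degree at least $k$ (encoded either as an $n$-bit indicator vector or as a list of vertex identifiers), and the verifier keeps one counter per vertex capped at $k$, incremented on streamed edges with both endpoints in $V'$, accepting iff every vertex of $V'$ reaches $k$, with completeness and soundness argued exactly as in the paper. The subtlety you flag about the $k\log n$ branch (a smallest witness set need not have $O(k)$ vertices, e.g.\ a long cycle for $k=2$) is also not resolved in the paper, which simply asserts the list encoding costs $k\log n$ bits, so your proposal matches the paper's argument.
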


\begin{proof}
Let $G$ be a graph with degeneracy at least~$k$. For the sake of convenience, we assume that $V = [n]$. There exists a an induced subgraph $G[V']$ of $G$ such that, for every $v\in V'$, $\deg_{G[V']}(v)\geq k$. The prover provides $V'$ to the verifier as certificate, which is provided as a list of vertices (requiring $k\log n$ bits) or be encoded as an $n$-bit vector (depending on whether $k\log n < n$). Given certificate $V'$, the verifier checks whether $\deg_{G[V']}(v)\geq k$ for every $v\in V'$. For this, it is sufficient to store counter for each vertex in $V'$, initialized to~$0$. For each edge $\{u,v\}$ that streams in, if $\{u,v\}\subseteq V'$ then the verifier increments the counters of both vertices $u$ and $v$ by~$1$. For saving space, a counter with value~$k$ is not incremented. At the end of the stream, if the counters of all the vertices in $V'$ have reached the value~$k$, then the verifier accepts, else it rejects. Note that storing the counters solely requires $O(n\log k)$ bits.

Again, completeness is satisfied by construction. To argue for soundness, let $G$ be a graph with degeneracy less than~$k$. For every induced subgraph $G[V']$ of $G$, there exists a vertex $v\in V'$ such that $\deg_{G[V']}(v)< k$. Therefore, for any set of vertices $V'$ provided by the prover, the verifier will reject, as the counter of~$v$ won't reach~$k$.
\end{proof}

 We now show that the certification schemes provided in Theorem~\ref{theo:certif-degeneracy-upper-bounds} are essentially optimal, up to polylogarithmic factors by reduction from disjointness. This holds even for a fixed threshold~$k=1$.

\begin{theorem}
    Any streaming certification scheme for $\Degeneracy_{\leq 1}$ consuming $m$-bit memory with $c$-bit certificates satisfies $c+m=\Omega(n)$ bits in $n$-node graphs.   
\end{theorem}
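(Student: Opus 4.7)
\medskip
\noindent
\textbf{Proof plan.} The plan is to reduce from the set-disjointness problem $\disj_N$, exactly as in the proof of Theorem~\ref{thm:MM_lb}, exploiting the fact that a graph is $1$-degenerate if and only if it is a forest. Recall that the non-deterministic communication complexity of $\disj_N$ is $\Omega(N)$~\cite{RaoY2020}: in the two-party nondeterministic model, Alice holds $x\subseteq [N]$, Bob holds $y\subseteq [N]$, they receive a common witness (the ``proof'') from a prover, Alice sends a single message to Bob, and Bob accepts iff $x\cap y =\varnothing$; the total number of bits communicated (witness plus message) must be $\Omega(N)$.

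\medskip
\noindent
\textbf{The gadget.} Given $(x,y)$, I would build an $n$-node graph $G=(V,E)$ with $n=N+2$ as follows. Let $V=\{u,v,w_1,\dots,w_N\}$. Alice contributes the edge $\{u,v\}$ together with the edges $E_A=\{\,\{u,w_i\}\mid i\in x\,\}$, and Bob contributes the edges $E_B=\{\,\{v,w_i\}\mid i\in y\,\}$. Observe that if $x\cap y=\varnothing$, then every $w_i$ is adjacent to at most one of $\{u,v\}$, so $G$ is a tree rooted at the edge $\{u,v\}$ (augmented with isolated vertices), in particular a forest, hence $1$-degenerate. Conversely, if there exists $i\in x\cap y$, then $\{u,v\},\{u,w_i\},\{v,w_i\}$ form a triangle, so $G$ contains a cycle and is not a forest, i.e., its degeneracy is at least~$2$.

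\medskip
\noindent
\textbf{The reduction.} Suppose for contradiction that there is a streaming certification scheme for $\Degeneracy_{\leq 1}$ with $c$-bit certificates and $m$-bit verifier space such that $c+m=o(n)$. Given $(x,y)$, Alice and Bob construct the graph $G$ above. A prover (shared non-deterministic guess) provides a certificate of $c$ bits. Alice simulates the verifier on her edges $E_A\cup\{\{u,v\}\}$ streamed in an arbitrary order, then transmits the $m$-bit memory state of the verifier to Bob; Bob resumes the simulation on $E_B$ and accepts iff the verifier ultimately accepts. By completeness and soundness of the certification scheme, this is a correct nondeterministic protocol for $\disj_N$, with total communication $c+m=o(n)=o(N)$, contradicting the $\Omega(N)$ nondeterministic lower bound for $\disj_N$. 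Hence $c+m=\Omega(n)$.

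\medskip
\noindent
\textbf{Potential obstacle.} The main conceptual issue is to make sure that the certificate counts towards the communication in the nondeterministic protocol. This is handled exactly as in Theorem~\ref{thm:MM_lb}: the certificate is part of the nondeterministic witness, and in the standard nondeterministic communication model its length contributes to the total complexity. The rest of the argument is mechanical once the forest-versus-triangle gadget is in place.
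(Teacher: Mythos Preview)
Your proof is correct and follows essentially the same approach as the paper: a reduction from $\disj_N$ via a graph that is a forest (hence $1$-degenerate) if and only if the inputs are disjoint, followed by the standard simulation argument. Your gadget is slightly different---and arguably cleaner, since an intersecting index yields a triangle directly---whereas the paper uses a star from $a$ to $\{b\}\cup x$ together with a path from $b$ through $y$, producing a longer cycle on intersection; the rest of the argument is identical.
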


\begin{proof}
The proof is via reduction from $\disj_N$. Given an instance $(x,y)$ of $\disj_N$ where $x$ and $y$ are subsets of $[N]$, let $x = \{x_1,\ldots, x_t\}$ and $y = \{y_1,\ldots,y_l\}$. Alice and Bob create a graph $G = ([N]\cup \{a,b\}, E_A\cup E_B)$ where Alice adds the edges $E_A$ and Bob adds the edges $E_B$ as follows. Alice adds the edges such that $E_A$ forms a star graph on the vertex set $\{a,b\}\cup x$ with $a$ as the root and $b$ and the set $x$ forming the leaves. Bob adds the edges such that $E_B$ forms an arbitrary path on the the vertex set $\{b\}\cup y$ with one of the endpoints of the path being $b$. By the construction of $G$, if $x\cap y = \varnothing$, $G$ is a tree and therefore, has degeneracy $1$. Else, if $x\cap y \neq \varnothing$, let $w$ be an element that is both in $x$ and $y$. Then, there exists two distinct paths between $w$ and $b$ in $G$, namely $\{w,a,b\}$ in $E_A$ and the path connecting $w$ and $b$ in $E_B$. Therefore, $G$ is at least $2$-degenerate. Hence, $G$ is $1$-degenerate if and only if $x\cap y = \varnothing$. 

If there exists a streaming certification scheme for $\Degeneracy_{\leq 1}$ on a $N$-vertex graph having $c$-bit certificate and $m$-bit verifier space satisfying that $c+m = o(N)$, then given an instance $(x,y)$ of $\disj_N$, Alice and Bob create the graph $G$ as discussed above, Alice simulates the certification scheme for $k = 1$ on the arbitrarily ordered edges of $E_A$, sends the state space of the verifier to Bob and Bob continiues the computation on the remaining edges $E_B$ (which are arbitrarily ordered as well) and accepts if and only if the algorithm decides that $G$ has degeneracy at most~$1$, thus describing a $o(N)$ non-deterministic communication protocol for $\disj_N$ which leads to a contradiction.     
\end{proof}

\subsection{Diameter}

In contrast to Theorem~\ref{theo:lower-bounds-decision}, we show that $\Diam_{\geq}$ can be certified in semi-streaming.

\begin{theorem}\label{theo:diam_atleast_k}
    There exists a semi-streaming certification scheme for $\Diam_{\geq}$. Specifically,  the size of the certificates provided by the prover is $O(n\log k)$ bits, and the space complexity of the verifier is $O(\log n)$ bits for all $n$-node graphs, and threshold $k\geq 1$.
\end{theorem}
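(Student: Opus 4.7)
The plan is to have the prover supply a BFS-style level function witnessing the existence of two far-apart vertices. Concretely, when $\diam(G)\geq k$ there exist $u,v\in V(G)$ with $\dist_G(u,v)\geq k$. The prover chooses such a $u$ and provides a labeling $\ell\colon V\to\{0,1,\dots,k\}$ defined by $\ell(w)=\min(\dist_G(u,w),k)$. This certificate has size $O(n\log k)$ bits since each vertex is assigned a value in a set of size $k+1$. The verifier then checks three simple conditions:
\begin{enumerate}
    \item some vertex carries label $0$;
    \item some vertex carries label $k$;
    \item for every edge $\{x,y\}$ appearing in the stream, $|\ell(x)-\ell(y)|\leq 1$.
\end{enumerate}

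For \emph{completeness}, the canonical labeling given above trivially satisfies the Lipschitz condition on every edge (BFS layers differ by at most one across any edge), contains $u$ with label $0$, and contains a vertex of label $k$ since $\diam(G)\geq k$. For \emph{soundness}, suppose the verifier accepts with certificate $\ell$, and let $u,v$ be any vertices with $\ell(u)=0$ and $\ell(v)=k$. If $u$ and $v$ lie in different connected components of $G$, then $\dist_G(u,v)=\infty\geq k$, so $\diam(G)\geq k$. Otherwise, any path $u=w_0,w_1,\dots,w_r=v$ consists of edges for which the verifier has checked $|\ell(w_i)-\ell(w_{i+1})|\leq 1$; summing telescopically gives $k=|\ell(v)-\ell(u)|\leq r$, hence $\dist_G(u,v)\geq k$ and $\diam(G)\geq k$.

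For the \emph{space complexity} of the verifier, observe that the certificate is available as read-only memory and need not be copied into working memory. The two ``existence of a $0$-label'' and ``existence of a $k$-label'' checks can be performed by scanning the certificate once, using an $O(\log n)$-bit index and a single-bit flag each. Once the stream starts, processing each edge $\{x,y\}$ requires only holding the two identifiers $x,y$ (each $O(\log n)$ bits), reading $\ell(x)$ and $\ell(y)$ from the certificate, and testing the difference, which uses $O(\log n)$ bits in total (assuming $k\leq n-1$, which is necessary for the statement to be non-trivial).

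There is not much of an obstacle here; the only subtlety worth highlighting is the justification that the Lipschitz constraint on edges is enough to certify a \emph{lower} bound on distance (rather than needing true BFS distances). The argument above makes this precise via the telescoping inequality along an arbitrary path, and it is also what drives the choice of $\{0,1,\dots,k\}$ as the codomain of $\ell$, yielding $O(n\log k)$-bit certificates rather than $O(n\log n)$.
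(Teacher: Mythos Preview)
Your proof is correct and follows essentially the same approach as the paper's: a truncated BFS-distance labeling from a witness vertex $u$, together with the edge-Lipschitz check and the existence of labels $0$ and $k$. The only cosmetic difference is that the paper caps labels at $k+1$ rather than $k$, which changes nothing; your telescoping argument for soundness is exactly the idea the paper uses (phrased as ``any shortest $u$--$v$ path would force a violating edge'').
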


\begin{proof}
 Let $k\geq 1$, and let $G=(V,E)$ be a graph with diameter at least~$k$. 
 There exist two vertices $u,v\in V$ at distance $k$ in~$G$. 
 One solution is for the  prover to provide the verifier with a BFS tree $T$ rooted at $u$ as certificate. 
 Such a certificate however requires $O(n\log n)$ bits to be encoded. 
 Instead, the prover provides the verifier with the distance of every node $w$ to $u$, with the an exception that if $\mbox{dist}(u,w)\geq k+1$, then the prover gives distance~$k+1$.
 These distances can be encoded on $O(n\log k)$ bits.  
 
 The verifier first checks that a node with certificate $0$ and a node with certificate at least $k$ exist, and rejects otherwise.
 When an edge $e=\{x,y\}$ is streamed in, the verifier checks that $e$ does not create a shortcut, i.e., it checks that $|\mbox{dist}(u,x)-\mbox{dist}(u,y)|\leq 1$. 
 At the end of the stream, if no shortcuts have been streamed, then the verifier accepts, else it rejects. 
 The space complexity  of the verifier is $O(\log n)$ bits.
 Note that, in fact, certain certificates where $\mbox{dist}(u,x)$ is lower than the real distance between $u$ and $x$ might also be accepted by the verifier.
 Nevertheless, this is enough in order to verify the existence of a node $v$ with distance at least $k$ from $u$, as detailed below.

 Completeness is satisfied by construction. 
 For soundness, let $G$ be a graph of diameter less than~$k$, and let $\mbox{dist}$ be the certificate provided by the prover. Assume $G$ is accepted, then there is a node $u$ with certificate $\mbox{dist}(u,u)=0$, and a node $v$
 with certificate $\mbox{dist}(u,v)\geq k$.
 Let~$P$ be a shortest path from $u$ to $v$, which has length smaller than $k$ by assumption.
 Then, any assignment of certificates to the nodes of $P$ must have two consecutive nodes $x,y$ whose certificates 
  $\mbox{dist}(u,x), \mbox{dist}(u,y)$
 differ by more than 1, and the verifier will reject upon the arrival of the edge~$\{x,y\}$.
\end{proof}

 We now show that the certification schemes provided in Theorem~\ref{theo:diam_atleast_k} are essentially optimal, up to polylogarithmic factors by reduction from disjointness. This holds even for a fixed threshold $k=8$.

\begin{theorem}
    Any streaming certification scheme for $\Diam_{\geq 8}$ consuming $m$-bit memory with $c$-bit certificates satisfies $c+m=\Omega(n)$ bits in $n$-node graphs.   
\end{theorem}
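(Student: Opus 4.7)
The plan is to reduce from $\disj_N$ with $N=\Theta(n)$, following the template of Theorem~\ref{thm:MM_lb} and the preceding degeneracy lower bound. Given Alice's input $x\subseteq [N]$ and Bob's input $y\subseteq [N]$, we construct a graph $G_{x,y}$ on $\Theta(N)$ vertices whose edge set partitions as $E_A\cup E_B$, with $E_A$ depending only on~$x$ and $E_B$ only on~$y$, satisfying the key equivalence
\[
\mathrm{diam}(G_{x,y})\geq 8 \iff x\cap y=\emptyset.
\]

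For the construction, we adapt the Holzer--Wattenhofer diameter gadget~\cite{HolzerW12} (the same family of graphs already invoked in Theorem~\ref{theo:diam_lb} for the $\Omega(n^2)$ decision lower bound on $\Diam_{=2}$), scaled down to linear size. Fix a backbone path of length~$8$ between two distinguished vertices $s$ and~$t$. For each index $i\in[N]$, introduce a small ``shortcut gadget'' with one half attached near $s$ and the other near~$t$. Alice's edges \emph{activate} the $s$-side half of gadget~$i$ exactly when $i\in x$, and Bob's edges activate the $t$-side half exactly when $i\in y$; the fixed edges glue the two halves so that a common index $i\in x\cap y$ yields an $s$-to-$t$ bypass of length at most~$7$, while any ``half-activated'' gadget (i.e., $i\in x\triangle y$) together with the fixed edges only produces detours of length $\geq 8$. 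Auxiliary fixed edges are added to guarantee global connectivity and to keep the diameter contributed by non-$(s,t)$ pairs bounded by~$7$ in all cases, so that $\mathrm{diam}(G_{x,y})=8$ exactly when $x\cap y=\emptyset$, and $\mathrm{diam}(G_{x,y})\leq 7$ otherwise.

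Given such $G_{x,y}$, the lower bound follows by the standard simulation argument. Suppose there were a streaming certification scheme for $\Diam_{\geq 8}$ using $c$-bit certificates and $m$-bit verifier memory with $c+m=o(n)=o(N)$. Then Alice and Bob obtain an $O(c+m)$-bit non-deterministic protocol for $\disj_N$ as follows: the prover sends Alice a $c$-bit certificate (which, in the disjoint case, is a valid proof that $\mathrm{diam}(G_{x,y})\geq 8$); Alice runs the verifier on the edges of $E_A$ in an arbitrary order and transmits the resulting $m$-bit verifier state to Bob; Bob resumes the simulation on $E_B$ and outputs the verifier's decision. By completeness and soundness, the protocol accepts iff $\mathrm{diam}(G_{x,y})\geq 8$, iff $x\cap y=\emptyset$, contradicting the $\Omega(N)$ non-deterministic communication lower bound for $\disj_N$~\cite{RaoY2020}.

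The main obstacle lies in the gadget calibration: the auxiliary edges needed to avoid isolated vertices within each shortcut gadget have to be attached sufficiently ``deep'' inside the backbone so that no Alice-only or Bob-only partial shortcut, when combined with the fixed edges, accidentally creates an $s$-$t$ bypass of length below~$8$ in the disjoint case. This is a delicate but routine tuning of the distances along the backbone; the rest of the argument is a verbatim adaptation of the non-deterministic communication framework used throughout this section.
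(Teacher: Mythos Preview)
Your high-level strategy matches the paper's exactly: reduce from $\disj_N$ with $N=\Theta(n)$, build a graph $G_{x,y}$ whose diameter is $\geq 8$ iff $x\cap y=\varnothing$, and run the standard simulation to contradict $\CCN(\disj_N)=\Omega(N)$. The simulation paragraph is fine.

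The gap is that you never actually give the gadget. You describe a ``backbone path of length~8'' with per-index ``shortcut gadgets'' whose halves are activated by Alice and Bob, and then defer the whole thing as ``delicate but routine tuning.'' But this calibration \emph{is} the proof: you must simultaneously ensure (i) the graph is connected for every $(x,y)$, (ii) the $s$--$t$ distance drops below~$8$ precisely when some index is shared, (iii) half-activated gadgets do not create accidental shortcuts, and (iv) every non-$(s,t)$ pair stays at distance~$\leq 7$ so that the diameter is governed by $s$ and $t$. Your sketch does not demonstrate that these constraints can be met with $\Theta(N)$ vertices, and your own word ``delicate'' concedes this. Invoking the Holzer--Wattenhofer construction does not help here: that gadget encodes a quadratic instance and distinguishes diameter~2 from~3; scaling it to a linear-size instance for threshold~8 is a different construction, not a parameter change.

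For comparison, the paper gives an explicit three-layer construction: vertices $u,a$, three columns $P_1,P_2,P_3$ of size~$N$ each, $b,v$, plus four auxiliary ``detour'' vertices $t_1,t_2,t_3,t_4$. Fixed edges make $a$ adjacent to all of $P_1$, $b$ adjacent to all of $P_3$, and provide universal length-$3$ detours $P_1$--$t_1$--$t_2$--$P_2$ and $P_2$--$t_3$--$t_4$--$P_3$. Alice adds $\{v^1_i,v^2_i\}$ for $i\in x$, Bob adds $\{v^2_j,v^3_j\}$ for $j\in y$. A common index yields the length-$6$ path $u,a,v^1_i,v^2_i,v^3_i,b,v$; disjointness forces at least one detour through the $t$-vertices, pushing the $u$--$v$ distance to~$\geq 8$. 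You should supply a construction at this level of concreteness rather than asserting one exists.
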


\begin{proof}
The proof is via reduction from $\disj_N$. Given an instance $(x,y)$ of $\disj_N$ where $x$ and $y$ are subsets of $[N]$, let $x = \{x_1,\ldots, x_t\}$ and $y = \{y_1,\ldots,y_l\}$. Alice and Bob create a graph $G = (P_1\cup P_2\cup P_3 \cup \{u,v,a,b, t_1,t_2, t_3,t_4\}, E_A\cup E_B)$ where Alice and Bob add edges $E_A$ and $E_B$ respectively. The graph $G$ is defined as follows. The vertex sets $P_1$, $P_2$ and $P_3$ are defined as follows. For $i \in [3]$, $P_i = \{v^i_1,\ldots,v^i_N\}$. Alice adds the edge set $E_A$ as follows. She adds the edges $\{u,a\}$, $\{b,v\}$, $\{t_1,t_2\}$ and $\{t_3,t_4\}$ to the graph $G$. She also adds $\{a,v^1_i\}$, $\{v^3_i,b\}$, $\{v^1_i, t_1\}$, $\{t_2, v^2_i\}$, $\{v^2_i,t_3\}$ and $\{t_4, v^3_i\}$ for each $i\in [N]$. For each $i\in x$, she also adds the edge $\{v^1_i,v^2_i\}$. Bob adds the edge set $E_B$ by adding the edge $\{v^2_j,v^3_j\}$ for each $j\in y$. By the construction of $G$, the diameter of $G$ is the length of the shortest path between $u$ and $v$. If $x\cap y\neq \varnothing$, there exists an element $i\in x\cap y$ and therefore, a path of length $6$ between $u$ and $v$ through $a,v^1_i, v^2_i, v^3_i$ and $b$. Else if $x\cap y = \varnothing$, for any element $i\in x$, there is a path between $u$ and $v^2_i$ via $a$ and $v^1_i$ but as $i\notin y$, the shortest path between vertices $v^2_i$ and $v^3_i$  is via vertices $t_3$ and $t_4$, and therefore, the shortest path between $u$ and $v$ is at least $8$. Therefore $G$ has diameter at least $8$ if and only if $x\cap y = \varnothing$.   

If there exists a streaming certification scheme for $\Diam_{\geq 8}$ on a $N$-vertex graph having $c$-bit certificate and $m$-bit verifier space satisfying that $c+m = o(N)$, then given an instance $(x,y)$ of $\disj_N$, Alice and Bob create the graph $G$ as discussed above, Alice simulates the certification scheme for $k = 8$ on the arbitrarily ordered edges of $E_A$, sends the state space of the verifier to Bob and Bob continiues the computation on the remaining edges $E_B$ (which are arbitrarily ordered as well) and accepts if and only if the algorithm decides that $G$ has diameter at least $8$, thus describing a $o(N)$ non-deterministic communication protocol for $\disj_N$ which leads to a contradiction.            
\end{proof}

\subsection{Colorability}\label{sec:Colorability}

As exemplified in Section~\ref{sec:intro}, there is a trivial streaming certification scheme for $\Col_{\leq}$ where, for every input graph $G$, and every input threshold $k\geq 1$ with $\chi(G)\leq k$, the prover merely provides the verifier with the color of each vertex of $G$ is a proper $k$-coloring of $G$. This certificates can be encoded on $O(n\log k)$ bits. We show that this is optimal. 

\begin{theorem}
    Any streaming certification scheme for $\Col_{\leq}$ consuming $m$-bit memory with $c$-bit certificates satisfies that $c+m = \Omega(n\log n)$ bits.
\end{theorem}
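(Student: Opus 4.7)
The plan is to reduce from the equality problem for functions, $\mathrm{EQ}_{n',k}$, in which Alice is given $f:[n']\to[k]$ and Bob is given $g:[n']\to[k]$, and they must decide whether $f=g$. Encoding each function by its $n'\lceil\log k\rceil$-bit table, $\mathrm{EQ}_{n',k}$ is equality on strings of length $n'\log k$. A standard rectangle argument---every monochromatic $1$-rectangle of equality is a single point because the $1$-set is only the diagonal---shows that the non-deterministic communication complexity of $\mathrm{EQ}$ on $N$-bit strings is $\Omega(N)$. I will therefore instantiate $n'=k=\lfloor n/2\rfloor$, which gives target complexity $\Omega(n\log n)$.

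Given $f$ and $g$, I would build a graph $G(f,g)$ on $n$ vertices partitioned into \emph{data vertices} $V=\{v_1,\ldots,v_{n'}\}$ and \emph{color anchors} $C=\{c_1,\ldots,c_k\}$. Alice's edge set $E_A$ consists of (a) all $\binom{k}{2}$ edges of the clique on $C$, and (b) for each $i\in[n']$, every edge $\{v_i,c_j\}$ with $j\neq f(i)$. Bob's edge set $E_B$ consists, for each $i\in[n']$, of every edge $\{v_i,c_j\}$ with $j\neq g(i)$. The key observation is that in any proper $k$-coloring the clique $C$ uses every color exactly once, so up to renaming $c_j$ receives color $j$; then Alice's edges force $v_i$ to receive color $f(i)$, the only color in $[k]$ not blocked by an $E_A$-edge, while Bob's edges force $v_i$ to receive color $g(i)$. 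Consequently $\chi(G(f,g))\leq k$ if and only if $f(i)=g(i)$ for every $i\in[n']$, i.e., iff $f=g$.

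Assuming a certification scheme for $\Col_{\leq}$ with certificate size $c$ and verifier space $m$, Alice and Bob can then run the following non-deterministic protocol for $\mathrm{EQ}_{n',k}$: Merlin publishes the certificate (of length $c$); Alice simulates the streaming verifier on $E_A$ in an arbitrary order, keeping only its $m$-bit state; she transmits this state to Bob; Bob continues the simulation on $E_B$ and outputs the verifier's final decision. Completeness and soundness of the protocol follow directly from those of the certification scheme together with the equivalence proved in the previous paragraph, so the non-deterministic communication cost $c+m$ is at least $\Omega(n'\log k)=\Omega(n\log n)$, which is what we want.

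I do not expect a real obstacle---the only delicate checks are purely bookkeeping: that $|V|+|C|\leq n$, handled by the choice $n'=k=\lfloor n/2\rfloor$; that $E_A$ and $E_B$ cleanly partition the stream, handled by letting Alice own all clique edges in $C$; and that the streaming order is unconstrained, so the specific ``Alice's edges then Bob's edges'' order used by the protocol is a legal stream. Note that the same construction also yields the $\Omega(n\log n)$ bound for the fixed-threshold variant $\Col_{\leq k}$ as long as $k=\Theta(n)$.
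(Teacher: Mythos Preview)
Your high-level strategy---reduce from an equality-type problem with $\Omega(n\log n)$ non-deterministic complexity via an anchor-clique gadget---is sound and is in fact the same idea the paper uses (there, equality of permutations $\perm_r$ rather than of arbitrary functions, with a two-layer gadget borrowed from Bousquet--Feuilloley--Zeitoun). Your gadget is pleasantly simpler than theirs.

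There is, however, a genuine gap in the reduction as written. You assert that ``$E_A$ and $E_B$ cleanly partition the stream'', but they do not: for every $i$ and every $j\notin\{f(i),g(i)\}$ the edge $\{v_i,c_j\}$ lies in \emph{both} $E_A$ and $E_B$. Hence in the simulated protocol the verifier is fed a stream in which many edges appear twice. The certification scheme is only specified, and its completeness and soundness are only guaranteed, on streams that are permutations of the edge set of~$G$; on a stream with repetitions the verifier's behaviour is undefined, so you cannot conclude anything about its output. (Concretely, a verifier that counts incident monochromatic edges, or that counts total edges, would behave differently.) Thus the step ``completeness and soundness of the protocol follow directly from those of the certification scheme'' fails.

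The fix is easy and brings you essentially to the paper's construction: replace the single anchor clique $C$ by two copies $C^A=\{a_1,\dots,a_k\}$ and $C^B=\{b_1,\dots,b_k\}$, each a $k$-clique, with cross edges $\{a_j,b_{j'}\}$ for all $j\neq j'$ (forcing $a_j$ and $b_j$ to share a color in any proper $k$-coloring). Let Alice stream the two cliques, the cross edges, and $\{v_i,a_j\}$ for $j\neq f(i)$; let Bob stream only $\{v_i,b_j\}$ for $j\neq g(i)$. Now $E_A$ and $E_B$ are genuinely disjoint, the coloring argument goes through unchanged, and with $n'=k=\lfloor n/3\rfloor$ you still get $c+m=\Omega(n\log n)$.
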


\begin{proof}
The proof is via reduction from the communication problem $\perm_r$ where $r\geq 3$. In this problem, Alice and Bob are given two permutations $\sigma$ and $\tau$ over $[r]$ respectively. The aim is to decide whether $\sigma = \tau$. First, we prove the following lower bound for non-deterministic communication complexity for $\perm_r$.  
Let $\CCN(f)$ denote the non-deterministic communication complexity of a given communication problem~$f$.  

\begin{fact}\label{comm_lower_bound:perm}
	$\CCN(\perm_r) = \Omega(r\log r)$.
\end{fact}
\begin{proof}
	Let the non-deterministic string that Alice and Bob receive from the oracle upon an arbitrary input $(\sigma, \tau)$ where Alice and Bob receive $\sigma$ and $\tau$ respectively, is of length at most $k$. If $k = o(r\log r)$, then there exist two distinct input instances $(\sigma',\sigma')$ and $(\tau',\tau')$ such that Alice and Bob receive the same non-deterministic string $s$ from the oracle as $2^k<r!$. Let the third instance be $(\sigma',\tau')$. Alice and Bob accept the third instance with the  non-deterministic string $s$ as they cannot differentiate between $(\sigma',\tau')$ and $(\sigma',\sigma')$ or $(\tau',\tau')$ which leads to a contradiction. 
\end{proof}
Now we describe the reduction to $\kCol_{\leq}$ using a construction borrowed from~\cite{DBLP:Feuilloley}. Given an input instance $(\sigma,\tau)$ of $\perm_k$, Alice and Bob create the graph $G$ such that $\chi(G)\leq k$ if and only if $\sigma = \tau$. Before describing the graph $G$, we describe the components of $G$ as follows. Let $P$ be the graph with vertex set $V(P) =[k]\times [2]$, and edge set
    \[
    E(P) = 
    \big \{\{(i,p),(j,p)\}\mid (i,j)\in [k]\times [k] \land i\neq j \land p\in [2]\big  \}
    \cup
    \big \{\{(i,1),(j,2)\} \mid (i,j)\in [k]\times [k] \land i\neq j\big \}.
    \]    
    Let $P$ and $P'$ be two copies of this graph, and let $E_{\sigma}$ and $E'_{\tau}$ be two sets of edges defined as follows. 
    \begin{align*}
        E_{\sigma} &=\{\{(i,1),(j,1)\} \mid (i,1)\in V(P) \land (j,1)\in V(P') \land j\neq \sigma(i)\}, \text{ and} \\
        E'_{\tau} &=   \{\{(i,2),(j,2)\} \mid (i,1)\in V(P) \land (j,1)\in V(P') \land j\neq \tau(i)\}.
    \end{align*}
The graph $G$ is defined as $G = P\cup P'\cup E_{\sigma}\cup E'_{\tau}$. Alice adds the edges of the graphs $P$ and $P'$ as well as $E_{\sigma}$ and Bob adds the edges $E'_{\tau}$.
\begin{fact}[Claim~18 in~\cite{DBLP:Feuilloley}] \label{claim:colorability}
   $G$ is $k$-colorable if and only if $\sigma = \tau$.
\end{fact} 

If there exists a certification scheme for $\kCol_{\leq}$ for $n$-vertex graphs taking $c$-bits of certificate and $m$-bit verifier space such that $c+m = o(n\log n)$, then given an instance $(\sigma,\tau)$ of $\perm_{r}$, Alice and Bob create graph $G$ as discussed above and as $G$ has $2r$ vertices, Alice simulates the certification scheme for $k = r$ on her part of the graph and sends the state space of the verifier to Bob who continues the simulation on his edges and accepts if and only if the algorithm decides that $G$ is $r$-colorable. But this describes a $o(r\log r)$-protocol for $\perm_r$ which leads a contradiction to Fact~\ref{comm_lower_bound:perm}.
\end{proof}

\subsection{Clique, independent set, and vertex cover}
\label{sec:mincover-easy-direction}

We conclude this section about ``easily certifiable problems'' by considering maximum independent set ($\MaxIS$), maximum clique ($\MaxClique$), and minimum vertex cover ($\MinCover$). The proof of the following result is using similar arguments as those previously presented in this section. 

\begin{theorem}\label{thm:easy-direction-IS-Clique-etc}
$\MaxIS_{\geq}$, $\MaxClique_{\geq}$, and $\MinCover_{\leq}$ can all be certified with certificates on $O(\min\{k\log n, n\})$ bits, and verifier's space complexity $O(\log n)$ bits  for all $n$-node graphs and threshold $k \geq 1$.
\end{theorem}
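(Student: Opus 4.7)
The plan is to give, for each of the three problems, a near-identical certification scheme in which the prover hands the verifier a distinguished vertex set $S \subseteq V$, encoded either as a list of vertex identifiers (using $O(k\log n)$ bits when $|S|\leq k$ or $|S|=k$) or as an $n$-bit indicator vector, whichever is smaller; this immediately gives certificate size $O(\min\{k\log n,n\})$. The verifier, upon receiving a streamed edge $\{u,v\}$, only needs to test membership of $u$ and $v$ in $S$ via the read-only certificate, for which $O(\log n)$ bits of working memory suffice (to hold the current edge, a counter, and a constant number of flags).

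For $\MaxIS_{\geq}$, the prover sends $S\subseteq V$ with $|S|\geq k$, claimed to be independent. The verifier first verifies from the certificate alone that $|S|\geq k$, and then, on each streamed edge $\{u,v\}$, rejects if both endpoints lie in $S$; at the end of the stream, if no such edge has appeared, it accepts. Completeness is immediate from the definition of an independent set; for soundness, observe that if $\alpha(G)<k$ then every $S$ with $|S|\geq k$ contains some edge of $G$, which will eventually be streamed and cause rejection. For $\MaxClique_{\geq}$, the prover again sends $S$ with $|S|\geq k$, claimed to be a clique. The verifier maintains a single counter of the streamed edges with both endpoints in $S$, and accepts iff this counter reaches $\binom{|S|}{2}$; soundness holds because any $S$ with fewer induced edges than $\binom{|S|}{2}$ is not a clique, so the counter never reaches its target.

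For $\MinCover_{\leq}$, the prover sends a set $C\subseteq V$ with $|C|\leq k$, claimed to be a vertex cover of $G$. The verifier checks $|C|\leq k$ from the certificate, and then on each streamed edge $\{u,v\}$ rejects if both $u\notin C$ and $v\notin C$, otherwise accepts at the end. Completeness follows because if $\tau(G)\leq k$ then such a $C$ exists, and soundness holds because if $C$ is not a cover then some edge is streamed with both endpoints outside $C$, triggering rejection.

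There is no substantive obstacle here: all three schemes are natural ``$\NP$-style'' witnesses (an independent set, a clique, a vertex cover), each verifiable by a single sweep over the edge stream using only a membership test into the certificate. The only care needed is the uniform treatment of the certificate encoding to obtain the $O(\min\{k\log n,n\})$ bound, and the uniform $O(\log n)$ working-memory bookkeeping (essentially one edge and one counter at a time), which together yield the stated complexities.
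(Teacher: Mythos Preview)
Your proposal is correct and follows essentially the same approach as the paper: give the witnessing set $S$ (an independent set, a clique, or a vertex cover) as certificate, and verify it by a single pass that either checks no streamed edge lies inside $S$, counts edges inside $S$ up to $\binom{|S|}{2}$, or checks every streamed edge touches $S$. You are in fact slightly more explicit than the paper about the dual encoding (list of identifiers versus $n$-bit indicator vector) that yields the $O(\min\{k\log n,n\})$ bound, and about the $O(\log n)$ bookkeeping; the paper's proof is terser but identical in substance.
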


\begin{proof}
    The certificate for $\MaxIS_{\geq}$ is an IS $S$ of size $k$ where $k$ is the input threshold, and the verifier merely checks that no stream edges are in $S\times S$.
    The certificate for $\MaxClique_{\geq}$ is similarly a clique $S$ of size $k$, which can be verified by counting the edges in $S\times S$ edges in the stream (there can be no more than $n^2$ edges in the clique, and thus a counter on $O(\log n)$ bits suffices.
    For $\MinCover_{\leq}$, a VC $S$ of size at most $k$ is a certificate, which is accepted unless an edge not touching $S$ arrives in the stream.
\end{proof}

\section{Hard Problems for Streaming Certification}\label{sec:Hard problems}

Not all problems can be certified in the semi-streaming model. This section exhibits a series of classical problems essentially requiring $\Omega(n^2)$ space complexity for being certified. Our lower bounds follows the proof strategy developed in~\cite{AbboudCKP21}, which define so-called \emph{families of streaming lower bound graphs}. Given a function $f:\{0,1\}^N\times\{0,1\}^N\to\{0,1\}$ and a graph predicate $\prob$, a family of graphs 
\[
\mathcal{G}=\{G_{x,y}=(V,E\cup A_{x}\cup B_{y})\mid (x,y)\in \{0,1\}^N\times\{0,1\}^N\}
\]
is a   family of streaming lower bound graphs for $f$ and $\prob$ if $A_x$ depends only on~$x$, $B_y$ depends only on~$y$, and $$G\models \prob \iff f(x,y)=1.$$ The following lemma is an extension of Theorem~15 in~\cite{AbboudCKP21} to non-deterministic communication complexity. 

\begin{lemma}\label{lem:family-of-streaming-lower-bound}
    If there exists a family of streaming lower bound graphs for $f$ and $\prob$ then any non-deterministic streaming certification scheme for $\prob$ with $c$-bit certificates and $m$ bits of memory satisfies $c+m=\Omega(\CCN(f))$. 
\end{lemma}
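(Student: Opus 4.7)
The plan is a direct reduction from non-deterministic two-party communication complexity, mirroring the deterministic streaming-to-communication reduction of~\cite{AbboudCKP21} but with the prover's certificate and the verifier's memory playing the roles of the non-deterministic witness and the transmitted message. Given a streaming certification scheme for $\prob$ with $c$-bit certificates and $m$-bit verifier memory, I will design a non-deterministic protocol for $f$ of cost $c+m$. The family $\mathcal{G}$ supplies the reduction gadget: on input $(x,y)$, Alice (who knows $x$) can locally construct the edges $E\cup A_{x}$ of $G_{x,y}$, Bob (who knows $y$) can locally construct $B_{y}$, and, by hypothesis, $G_{x,y}\models \prob$ iff $f(x,y)=1$.

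The protocol proceeds in three steps. First, the prover non-deterministically guesses a $c$-bit certificate $\sigma$ for $G_{x,y}$ and makes it visible to both players. Second, Alice runs the streaming verifier with certificate $\sigma$ on the edges $E\cup A_{x}$ in an arbitrary order, and, at the end of this simulation, transmits to Bob the resulting $m$-bit verifier state $s_{A}$. Third, Bob initializes the verifier to state $s_{A}$, resumes the streaming computation on the edges $B_{y}$ (with $\sigma$ available as a read-only certificate), and outputs whatever the verifier outputs once its stream ends.

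Correctness follows directly from the definition of a certification scheme. For completeness, if $f(x,y)=1$ then $G_{x,y}\models \prob$, so there exists a certificate $\sigma^{*}$ such that the verifier accepts under \emph{every} edge ordering; the prover supplies such a $\sigma^{*}$, and the concatenation of Alice's and Bob's streams is one legal ordering of $E(G_{x,y})$, hence Bob accepts. For soundness, if $f(x,y)=0$ then $G_{x,y}\not\models \prob$, so by the soundness of the certification scheme no certificate causes the verifier to accept under any ordering; in particular Bob must reject whatever $\sigma$ the prover chooses. The total non-deterministic cost is $c$ bits of shared advice plus $m$ bits of one-way communication from Alice to Bob, so $\CCN(f)\leq c+m$, i.e., $c+m=\Omega(\CCN(f))$.

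I do not anticipate any genuine obstacle. The only minor care is a bookkeeping convention: the fixed edge set $E$ is known to both players, so I fix once and for all that Alice (rather than Bob) includes it in her part of the simulation, to avoid processing an edge twice. A second point worth spelling out is that because the verifier is required to be correct under every stream order, it is in particular correct under the composite order ``Alice's edges followed by Bob's edges'', which is precisely what the simulation produces. Under this convention, the claimed bound drops out immediately.
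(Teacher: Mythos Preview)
Your proposal is correct and follows essentially the same approach as the paper's proof: Alice streams $E\cup A_x$ under a guessed certificate, transmits the verifier's state to Bob, and Bob finishes on $B_y$. Your write-up is in fact more explicit than the paper's, spelling out how the certificate serves as shared non-deterministic advice and why the ``any edge order'' clause in completeness and soundness legitimizes the Alice-then-Bob ordering.
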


\begin{proof}
    Given an instance $(x,y)$ of $f$, Alice and Bob can construct $A_{x}$ and $B_{y}$, respectively. Alice can then simulate the non-deterministic certification scheme by streaming the edges of $E\cup A_{x}$ in arbitrary order, and then send to Bob the state of the verifier after all edges in $E\cup A_{x}$ has been streamed. Then Bob can carrying on the verification by streaming all edges in~$B_y$. Bob then outputs~0 or~1, according to whether the verifier rejects or accepts, respectively. 
\end{proof}

\subsection{Diameter}

While the certification of $\Diam_{\geq}$ was proved to be possible in semi-streaming, this is not the case of $\Diam_{\leq 2}$, i.e., even for a threshold  $k=2$.  

\begin{theorem}\label{theo:diam_lb}
    Any streaming certification scheme for $\Diam_{\leq 2}$ consuming $m$-bit memory with $c$-bit certificates satisfies $c+m = \Omega(n^2)$ bits in $n$-node graphs.
\end{theorem}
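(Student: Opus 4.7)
The plan is to invoke Lemma~\ref{lem:family-of-streaming-lower-bound} by reducing from set disjointness. The classical lower bound $\CCN(\disj_N) = \Omega(N)$ makes disjointness the natural hard source: if I can build a family of streaming lower bound graphs for $\disj_N$ and $\Diam_{\leq 2}$ on $n$-vertex graphs with $N = \Theta(n^2)$, the lemma directly yields $c + m = \Omega(n^2)$.

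The construction I have in mind is an adaptation of the diameter gadget of~\cite{HolzerW12}. Take $k = \lfloor (n-2)/2 \rfloor$ and view $x, y$ as subsets of $[k] \times [k]$, so that $N = k^2 = \Theta(n^2)$. The vertex set is $V = \{a, b\} \cup \{u_1, \ldots, u_k\} \cup \{v_1, \ldots, v_k\}$. The common edge set $E$ consists of $\{a, b\}$, the star of $a$ over $\{u_1, \ldots, u_k\}$, and the star of $b$ over $\{v_1, \ldots, v_k\}$. Alice's edges are $A_x = \{\{u_i, v_j\} : (i,j) \notin x\}$, and Bob's edges are $B_y = \{\{u_i, v_j\} : (i,j) \notin y\}$. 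Thus $\{u_i, v_j\} \in E(G_{x,y})$ iff $(i,j) \notin x \cap y$.

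I then need to verify the equivalence $\Diam(G_{x,y}) \leq 2 \iff x \cap y = \varnothing$. A short case analysis shows that every pair other than $\{u_i, v_j\}$ is at distance at most~$2$ in $G_{x,y}$: the vertex $a$ serves as a common neighbor for any two vertices in $\{u_1, \ldots, u_k\}$ and for the pairs $\{b, u_i\}$; the vertex $b$ analogously covers pairs inside $\{v_1, \ldots, v_k\}$ and the pairs $\{a, v_j\}$; and the edge $\{a, b\}$ handles the pair $(a,b)$. For a pair $\{u_i, v_j\}$, the absence of internal edges inside $\{u_1, \ldots, u_k\}$ and inside $\{v_1, \ldots, v_k\}$, together with $a \not\sim v_j$ and $b \not\sim u_i$, rules out every length-$2$ path other than the direct edge. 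Hence $d_{G_{x,y}}(u_i, v_j) \leq 2$ iff $\{u_i, v_j\} \in E(G_{x,y})$, iff $(i,j) \notin x \cap y$; when $(i,j) \in x \cap y$ the distance is exactly~$3$ via the path $u_i, a, b, v_j$. Combining these observations with Lemma~\ref{lem:family-of-streaming-lower-bound} gives $c + m = \Omega(\CCN(\disj_N)) = \Omega(n^2)$.

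The delicate step is the distance analysis for $\{u_i, v_j\}$ pairs: one must ensure that the gadget offers no free length-$2$ shortcut through some common neighbor that would collapse the diameter independently of the input. This is precisely what the asymmetric apex structure ($a$ adjacent only to $U$, $b$ adjacent only to $V$) and the emptiness of the induced subgraphs on $U$ and on $V$ are designed to guarantee. Once this local property is secured, the rest reduces to the general framework of Lemma~\ref{lem:family-of-streaming-lower-bound} combined with the standard $\Omega(N)$ lower bound on the non-deterministic communication complexity of $\disj_N$.
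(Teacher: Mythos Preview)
Your overall strategy matches the paper's: exhibit a family of streaming lower-bound graphs for $\disj_N$ and $\Diam_{\leq 2}$ with $N=\Theta(n^2)$, then invoke Lemma~\ref{lem:family-of-streaming-lower-bound}. Your distance analysis is correct, and the equivalence $\Diam(G_{x,y})\leq 2 \iff x\cap y=\varnothing$ does hold for your gadget. However, the gadget you use is not the one from~\cite{HolzerW12} that the paper adopts, and the change introduces a genuine gap.

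In the paper's construction, Alice's edges $A_x$ lie inside $\{a_1,\dots,a_p\}$ and Bob's edges $B_y$ lie inside the disjoint set $\{b_1,\dots,b_p\}$, so $A_x\cap B_y=\varnothing$ automatically. In your gadget both players place their edges across the \emph{same} bipartition $U\times V$: whenever $(i,j)\notin x\cup y$, the edge $\{u_i,v_j\}$ lies in both $A_x$ and $B_y$. In the simulation that underlies Lemma~\ref{lem:family-of-streaming-lower-bound}, Alice streams $E\cup A_x$ and Bob then streams $B_y$, so such an edge is fed to the verifier twice. The paper's model presents each edge of the input graph exactly once (the stream is a permutation of the edge set), and the completeness and soundness guarantees of the certification scheme are only stated for such streams; the verifier is under no obligation to behave correctly on a stream with repeated edges. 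Consequently, the protocol Alice and Bob run need not compute $\disj_N$, and the reduction breaks. The~\cite{HolzerW12} gadget avoids this precisely by putting $A_x$ and $B_y$ on disjoint supports; your construction would need an analogous separation (e.g., two copies of the $U$--$V$ layer, one for each player, linked so that the diameter argument still goes through) before Lemma~\ref{lem:family-of-streaming-lower-bound} can be applied.
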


\begin{proof}
    A family of streaming lower bound graphs for $\disj_N$ and $\Diam_{\leq 2}$ was given in~\cite{HolzerW12}. Let $N=p(p-1)/2$. Each graph has $2(p+1)$ nodes, forming the set $V=\{a_0,\dots,a_p,b_0,\dots,b_p\}$. Let 
    $$
    E= \{\{a_i,b_i\}, i=0,\dots,p\} \cup \{\{a_0,a_i\}, i=1,\dots,p\}\cup \{\{b_0,b_i\}, i=1,\dots,p\}.$$ 
    Given $(x,y) \in\{0,1\}^N\times \{0,1\}^N$, for every $(i,j)\in [p]\times [p]$ with $i<j$, we set  $\{i,j\}\in A_x$ if $x[i,j]=0$, and $\{i,j\}\in B_y$ if $y[i,j]=0$. It was shown in~\cite{HolzerW12} that $G=(V,E\cup A_x\cup B_y)$ has diameter~2 if and only if $x\cap y=\varnothing$. Hence, there exists a family of streaming lower bound graphs for $\disj$ and $\Diam_{\leq 2}$. The result follows by direct application of Lemma~\ref{lem:family-of-streaming-lower-bound}. 
\end{proof}

\subsection{Minimum vertex cover and maximum independent set}
\label{sec:mincover-hard-direction}

\begin{theorem}\label{thm:hard-mini-cover}
Any streaming certification scheme  for $\MinCover_{\geq}$
consuming $m$-bit memory with $c$-bit certificates satisfies $c+m=\Omega(n^2)$ in $n$-node graphs.
\end{theorem}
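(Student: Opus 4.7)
The plan is to apply Lemma~\ref{lem:family-of-streaming-lower-bound} to reduce from set disjointness. I will exhibit a family of streaming lower bound graphs for $\disj_N$ and $\MinCover_{\geq k}$ with $N = \Theta(n^2)$; combined with the classical fact $\CCN(\disj_N) = \Omega(N)$, this will immediately yield $c+m = \Omega(n^2)$.

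For the construction, I would work with the duality $\tau(G) = |V(G)| - \alpha(G)$ and engineer a graph whose maximum independent set takes only two possible values depending on disjointness. Set $p = \lfloor n/2 \rfloor$ and $N = p^2$, take $V = V_1 \cup V_2$ with $V_1 = \{v_1,\dots,v_p\}$ and $V_2 = \{w_1,\dots,w_p\}$, and let the fixed edge set $E$ make each of $V_1$ and $V_2$ a clique. Indexing the coordinates of $\disj_N$ by $[p]\times[p]$, Alice inserts the cross edge $\{v_i,w_j\}$ whenever $x_{i,j}=0$, and Bob inserts it whenever $y_{i,j}=0$, so that a cross edge is absent in $G_{x,y} = (V, E \cup A_x \cup B_y)$ precisely when $x_{i,j} = y_{i,j} = 1$, i.e., precisely when $(i,j) \in x \cap y$.

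The analysis is then short. Because $V_1$ and $V_2$ are cliques, any independent set contains at most one vertex from each, so $\alpha(G_{x,y}) \leq 2$, with equality iff some cross edge is missing, iff $x \cap y \neq \varnothing$. Dually, $\tau(G_{x,y}) \geq n-1$ iff $\disj_N(x,y) = 1$, so the family $\{G_{x,y}\}$ is a family of streaming lower bound graphs for $\disj_N$ and $\MinCover_{\geq n-1}$, and Lemma~\ref{lem:family-of-streaming-lower-bound} delivers $c+m = \Omega(N) = \Omega(n^2)$.

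The main design obstacle is ensuring that the independent-set parameter is a \emph{binary} function of the disjointness instance: a naive bipartite encoding of $x$ and $y$ on two sides only shifts $\alpha$ by a small additive amount absorbed by the large side, so the threshold fails to separate the two cases. The two-clique gadget above resolves this by forcing $\alpha \in \{1,2\}$ throughout, which is exactly what allows a single missing cross edge to flip the threshold $k = n-1$.
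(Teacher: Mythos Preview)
Your reduction idea and the $\alpha/\tau$ duality analysis are correct, and the argument is considerably more direct than the paper's (which imports the full bit-gadget construction of~\cite{AbboudCKP21}). There is, however, a genuine technical gap: in your family the sets $A_x$ and $B_y$ are \emph{not} disjoint. Whenever $x_{i,j}=y_{i,j}=0$ the cross-edge $\{v_i,w_j\}$ lies in both $A_x$ and $B_y$, so in the simulation underlying Lemma~\ref{lem:family-of-streaming-lower-bound} (Alice streams $E\cup A_x$, sends the verifier state, then Bob streams $B_y$) this edge is fed to the verifier twice. The paper's model presents each edge of $G$ exactly once in adversarial order, and the completeness and soundness guarantees of an \emph{arbitrary} certification scheme say nothing about its behaviour on a stream with repeated edges; hence the derived non-deterministic protocol for $\disj_N$ need not be correct. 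Bob cannot filter out the duplicates either, since he does not know~$x$.

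The repair is to place Alice's and Bob's variable edges on disjoint vertex sets. For instance, take four cliques $A,A',B,B'$ of size $p$, add to the fixed set $E$ the edges $\{a_i,b_k\}$ for all $i\neq k$ and $\{a'_j,b'_l\}$ for all $j\neq l$ (and no edges between $A$ and $B'$ or between $A'$ and $B$), and let $A_x=\{\{a_i,a'_j\}:x_{i,j}=0\}$ and $B_y=\{\{b_i,b'_j\}:y_{i,j}=0\}$. Then $A_x\cap B_y=\varnothing$, every independent set has size at most~$4$, and $\{a_i,a'_j,b_i,b'_j\}$ is independent iff $x_{i,j}=y_{i,j}=1$; thus $\tau(G_{x,y})\geq 4p-3$ iff $\disj(x,y)=1$, and Lemma~\ref{lem:family-of-streaming-lower-bound} now applies cleanly. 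This is precisely the skeleton of the paper's proof: the paper uses the same four cliques and the same $A_x,B_y$, but encodes the ``$i\neq k$'' and ``$j\neq l$'' constraints via the bit gadget of~\cite{AbboudCKP21} (yielding the threshold $4(N-1)+4\log N+1$) rather than by explicit edges --- a refinement needed in the \cgst setting to bound the cut width, but unnecessary for streaming lower bounds.
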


\begin{proof}
The proof is by reduction from $\disj$  from~\cite{AbboudCKP21}. Small details have to be fixed for adapting this reduction to our setting, so we recall their construction hereafter. Let us first recall the Bit Gadget introduced in~\cite{AbboudCKP21}. Let $N$ be a power of $2$, and let 
\[
A = \{a_i \mid  i\in \{0,\ldots,N-1\}\}, \text{ and } 
B = \{b_i \mid  i\in \{0,\ldots,N-1\}\}.
\]
For each set $S\in \{A,B\}$, let 
\[
F_S = \{f_S^i\ |\ i\in \{0,\ldots,\log N -1\}\}, \text{ and } 
T_S =\{t_S^i\ |\ i\in \{0,\ldots,\log N -1\}\},
\]
where $T$ and $F$ stands for true and false, respectively. We consider the graph  with vertex set 
$
A\cup B\cup F_A \cup F_B\cup T_A \cup T_B.
$
Let $S\in \{A,B\}$, and let $s_i\in S$. 
Let  $bin(s_i)=\iota_{N-1}\dots \iota_0$ be the binary representation of $i$ on $\log N$ bits. 
Then, for every $j\in\{0,\dots,N-1\}$, connect $s_i$ to $f_S^j$ if $\iota_j=0$, and to $t_S^j$ if $\iota_j=1$. Finally, for every $i\in \{0,\ldots, \log N -1\}$ connect $f_A^i$ to $t_B^i$, and $t_A^i$ to $f_B^i$. This completes the description of the Bit Gadget.

The construction carries on by taking two copies of the Big Gadget, that is, there are four sets of $N$ vertices, denoted by
\[
A = \{a_i \mid i\in \{0,\ldots, N-1\}\}, \text{ and } B = \{b_i \mid i\in \{0,\ldots, N-1\}\},
\]
as above, and by 
\[
A' = \{a'_i \mid i\in \{0,\ldots, N-1\}\}, \text{ and } B' = \{b'_i \mid i\in \{0,\ldots, N-1\}\}.
\]
Each such set $S\in\{A,B,A',B'\}$ is connected to the $2\log N$ nodes in $F_S = \{f_i^S\ |\ i\in \{0,\ldots, \log N-1\}\}$ and $T_S = \{t_i^S\ |\ i\in \{0,\ldots,\log N-1\}\}$ as in the Big Gadget, where $A,B,F_A,T_A,F_B,T_B$ is one gadget, and $A',B',F_{A'},T_{A'},F_{B'},T_{B'}$ is the other gadget. 

In addition, for each $S\in\{A,B,A',B'\}$, connect all the nodes in $S$ to form a clique.  Also, connect the nodes of the bit-gadgets to form a collection of $4$-cycles, as follows. For every  $i\in \{0,\ldots, \log N -1\}$, there is a $4$-cycle $(f_{A}^i,t_{A}^i,f_{B}^i,t_{B}^i)$, and a $4$-cycle $(f_{A'}^i,t_{A'}^i,f_{B'}^i,t_{B'}^i)$. Let $G$ be the resulting graph. It was shown in~\cite{AbboudCKP21} that every vertex cover of $G$ contains at least $N-1$ nodes from each of the clique  $S\in\{A,B,A',B'\}$, and at least $4\log N$ in $F_A \cup T_A  \cup F_B  \cup T_B  \cup F_{A'}  \cup T_{A'}  \cup F_{B'}  \cup T_{B'}$.  Moreover, a vertex cover of $G$ of size $4(N-1) + 4\log N$ exists the if the condition in the following fact holds. 

\begin{fact}[Claim 6 in \cite{AbboudCKP21}]\label{fact:6th-claim-of-Abboud-et-al}
    Let $U\subseteq V(G)$ be a vertex cover of $G$ of size $4(N-1) + 4\log N$. There exists $(i,j)\in \{0,\ldots, k-1\}^2$ such that $U\cap \{a_i,b_i,a'_j,b'_j\}=\varnothing$.
\end{fact}

Now, let $x\in\{0,1\}^{N^2}$ and $y\in\{0,1\}^{N^2}$ be an instance of $\disj$, where Alice receives $x$ as input, 
Bob receives $y$ as input, and the elements of $x$ and $y$ are indexed by the pairs $(i,j)\in \{0,\ldots,N-1\}^2$.
Let $G_{x,y}$ be the graph $G$ augmented with some additional edges, as follows. For every $(i,j)\in \{0,\ldots,N-1\}^2$, if $x[i,j] = 0$ then add the edge $\{a_i,a'_j\}$, and if $y[i,j] = 0$ then add the edge $\{b_i,b'_j\}$.

\begin{fact}[Lemma 4 in~\cite{AbboudCKP21}]\label{fact:disjointness criterion}
    The graph $G_{x,y}$ has a vertex cover of size $4(N - 1) +4\log N$ if and only if $x\cap y\neq \varnothing$. 
\end{fact}

Let us revisit this latter fact, for figuring out what is the size of a minimum vertex cover in the case where $x$ and $y$ are disjoint. 

\begin{fact}\label{fact:nex-fact-on-bit-gadget}
    The graph $G_{x,y}$ has a minimum vertex cover of size at least $4(N-1) + 4\log N + 1$ if and only if $x\cap y=\varnothing$.
\end{fact}

To establish this fact, let us assume that $x\cap y=\varnothing$. Then, let us assume, for the purpose of contradiction, that every minimum vertex cover $U$ of $G$ has size at most $4(N-1) + 4\log N$. By Fact~\ref{fact:6th-claim-of-Abboud-et-al}, there exists four nodes $a_i,b_i,a'_j,b'_j$ not in $U$. Therefore, $\{a_i,a'_j\}\notin E(G_{x,y})$ and 
$\{b_i,b'_j\}\notin E(G_{x,y})$, which implies that $x[i,j] = y[i,j] = 1$, contradicting $x\cap y=\varnothing$. Conversely, let us assume that the size of a minimum vertex cover of $G$ is at least $4(N - 1) + 4\log N + 1$. If $x\cap y\neq \varnothing$, then, thanks to Fact~\ref{fact:disjointness criterion}, $G_{x,y}$ has a vertex cover of cardinality $4(N-1) + 4\log N$, a contradiction.  This completes the proof of Fact~\ref{fact:nex-fact-on-bit-gadget}.

\medskip

We now have all ingredients to apply Lemma~\ref{lem:family-of-streaming-lower-bound}. Let $V_A=A\cup A'\cup F_{A}\cup T_{A}\cup F_{A'}\cup T_{A'}$ and $V_B=B\cup B'\cup F_{B}\cup T_{B}\cup F_{B'}\cup T_{B'}$. 
Alice computes $A_x$, the edges of $G_{x,y}$ induced by the nodes in $V_A$, as well as $E$, the edges of $G_{x,y}$ between $V_A$ and $V_B$. 
Similarly, Bob computes $B_y$, the edges of $G_{x,y}$ induced by the nodes in~$V_B$. 
If there exists a certification scheme for~$\MinCover_{\geq}$ consuming $m$-bit memory with $c$-bit certificates satisfying $c+m=o(N^2)$ for the family of graphs $\mathcal{F }_N = \{G_{x,y}\mid  x,y\in \{0,1\}^{N^2}\}$, then there exists a non-deterministic protocol for \disj\ with complexity $o(N^2)$ (as $N = \Theta(n)$), contradiction.
\end{proof}

In an $n$-node graph $G=(V,E)$, if $I$ is an independent set then $V\smallsetminus I$ is a vertex cover. 
 Hence, the sum of the $\MaxIS$ size and the $\MinCover$ size is $n$,
 and an certification scheme for 
 $\MaxIS_{\leq k}$ can also be used to certify
 $\MinCover_{\geq n-k}$.
 The following is thus a direct consequence of Theorem~\ref{thm:hard-mini-cover}. 
 
\begin{corollary}
    Any streaming certification scheme for $\MaxIS_{\leq}$ consuming $m$-bit memory with $c$-bit certificates satisfies $c+m=\Omega(n^2)$ in $n$-node graphs.
\end{corollary}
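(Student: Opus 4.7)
The plan is to reduce from $\MinCover_{\geq}$ to $\MaxIS_{\leq}$, exploiting the classical complementarity identity: for any $n$-node graph $G=(V,E)$, a set $I\subseteq V$ is an independent set if and only if $V\smallsetminus I$ is a vertex cover, so $\alpha(G)+\tau(G)=n$, where $\alpha$ and $\tau$ are the independence and vertex cover numbers. Consequently, $\tau(G)\geq k'$ if and only if $\alpha(G)\leq n-k'$.

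Concretely, suppose there exists a streaming certification scheme for $\MaxIS_{\leq}$ with certificate size $c$ and verifier memory $m$. I would build a certification scheme for $\MinCover_{\geq}$ as follows. On input threshold $k'$ for $\MinCover_{\geq}$, the verifier sets $k=n-k'$ (using $O(\log n)$ extra bits), and then invokes the $\MaxIS_{\leq}$ verifier on the streamed edges with threshold $k$, using the prover's certificate unchanged. Completeness holds because any certificate that convinces the $\MaxIS_{\leq k}$ verifier on a graph with $\alpha(G)\leq k$ also certifies $\tau(G)\geq n-k=k'$; soundness holds for the same reason in the contrapositive direction. The resulting scheme for $\MinCover_{\geq}$ uses $c$ bits of certificate and $m+O(\log n)$ bits of memory.

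Applying Theorem~\ref{thm:hard-mini-cover} to this derived scheme yields $c+m+O(\log n)=\Omega(n^2)$, and since the additive $O(\log n)$ term is dominated, we conclude $c+m=\Omega(n^2)$. There is no real obstacle here: the reduction is essentially a bijection between feasible witnesses, and the threshold transformation $k\mapsto n-k$ only incurs a logarithmic additive cost that is absorbed by the quadratic lower bound.
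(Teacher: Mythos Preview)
Your proposal is correct and follows exactly the same approach as the paper: both use the identity $\alpha(G)+\tau(G)=n$ to turn a certification scheme for $\MaxIS_{\leq k}$ into one for $\MinCover_{\geq n-k}$, and then invoke Theorem~\ref{thm:hard-mini-cover}. Your version is simply more explicit about the $O(\log n)$ overhead for the threshold conversion, which the paper leaves implicit.
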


\subsection{Maximum Clique}
\begin{theorem}
    Any streaming certification scheme for $\MaxClique_{\leq 3}$ consuming $m$-bit memory with $c$-bit certificates satisfies $c+m = \Omega(n^2)$ in $n$-node graphs.
\end{theorem}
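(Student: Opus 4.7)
The plan is to apply Lemma~\ref{lem:family-of-streaming-lower-bound} with $f=\disj_N$, by exhibiting a family of streaming lower bound graphs $\{G_{x,y}\}$ for $\disj$ and $\MaxClique_{\leq 3}$ using only $\Theta(\sqrt{N})$ vertices. Since $\CCN(\disj_N)=\Omega(N)$~\cite{RaoY2020}, the lemma will immediately give $c+m=\Omega(n^2)$. The gadget I would use is inspired by the clique-detection lower bound of Czumaj and Konrad~\cite{DBLP:CzumajK20}, which is already cited in the paper.

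Concretely, I would take $p=\Theta(n)$ with $N=p^2$, partition the vertex set into four groups $A=\{a_i\}_{i\in[p]}$, $A'=\{a'_i\}_{i\in[p]}$, $B=\{b_j\}_{j\in[p]}$, $B'=\{b'_j\}_{j\in[p]}$, and view $x,y\in\{0,1\}^N$ as indexed by pairs $(i,j)\in[p]\times[p]$. The common base edge set $E$ would consist of the two matchings $\{a_i,a'_i\}$ and $\{b_j,b'_j\}$ for $i,j\in[p]$, together with the two complete bipartite graphs between $A$ and $B'$ and between $A'$ and $B$. Alice's edges would be $A_x=\{\{a_i,b_j\}\mid x_{i,j}=1\}$, all lying inside $A\cup B$, and Bob's edges would be $B_y=\{\{a'_i,b'_j\}\mid y_{i,j}=1\}$, all lying inside $A'\cup B'$.

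The central step is to establish $G_{x,y}\models\MaxClique_{\leq 3}\iff x\cap y=\varnothing$. The cleanest argument is to observe that the subgraphs induced on $A\cup B$ and on $A'\cup B'$ are each bipartite (between $A$ and $B$, resp.\ between $A'$ and $B'$), so neither side contains a triangle. Hence any $K_4$ in $G_{x,y}$ must have exactly two vertices on each side of the cut $(A\cup B,A'\cup B')$, and those two vertices must split as one in $A$ and one in $B$ (resp.\ one in $A'$ and one in $B'$). A direct check of the six edges required by a candidate $K_4$ of the form $\{a_i,b_j,a'_k,b'_l\}$ then shows that all six are present precisely when $i=k$, $j=l$, $x_{i,j}=1$, and $y_{i,j}=1$. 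Consequently $K_4\subseteq G_{x,y}$ iff $x\cap y\neq\varnothing$, which is what Lemma~\ref{lem:family-of-streaming-lower-bound} needs.

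The main obstacle I anticipate is the balancing act inside the base graph: it must provide exactly four of the six edges of every potential $K_4$ while leaving the remaining two to be ``switched on'' by Alice and Bob through $x$ and $y$, and at the same time it must not create any triangle that could be combined with one further vertex and one $x$- or $y$-controlled edge into a spurious $K_4$. The bipartition of $G_{x,y}$ into $A\cup B$ versus $A'\cup B'$ is precisely the structural feature that rules out such unintended cliques, and verifying this carefully (in particular, checking all ways of distributing four vertices across the four groups) is the technical heart of the argument. Once the equivalence is in place, Lemma~\ref{lem:family-of-streaming-lower-bound} together with the linear non-deterministic lower bound for $\disj_N$ closes the proof.
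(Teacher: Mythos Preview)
Your proof is correct and takes a genuinely different route from the paper. The paper invokes the $(\Omega(n^2), O(n^{3/2}))$-lower bound graph of Czumaj and Konrad as a black box; that construction is non-trivial, relying on dense bipartite graphs without $4$-cycles. You instead give an explicit four-partite gadget whose correctness is a short case analysis --- in fact one can streamline your argument by noting that each of $A,A',B,B'$ is an independent set in $G_{x,y}$, so any $K_4$ must pick exactly one vertex from each part, after which the two matchings force $i=k$ and $j=l$, and the Alice/Bob edges force $x_{i,j}=y_{i,j}=1$. The price you pay is a denser fixed edge set ($\Theta(n^2)$ edges rather than $O(n^{3/2})$), but in the framework of Lemma~\ref{lem:family-of-streaming-lower-bound} this is irrelevant: only the vertex count and $\CCN(\disj_N)$ enter the bound. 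Your argument is therefore more elementary and self-contained for the streaming setting, while the sparser Czumaj--Konrad gadget the paper imports is what is needed in the \cgst model it was originally designed for.
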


\begin{proof}
    The proof is by reduction from \disj. Before giving the reduction, we define the  \textit{lower bound graph} which was introduced in~\cite[Definition~1]{DBLP:CzumajK20} for detection of cliques in \cgst. 

Let $G = (A\cup B,E)$ be a bipartite graph with $|A| = |B| = n$ and let $k,m$ be integers. 
Then $G$ is called a \textit{$(k,m)$-lower bound graph} if $|E| \leq m$ and there exist bipartite graphs $H_A = (A,E_A)$ and $H_B = (B,E_B)$ with $E_A = \{e_1,\ldots,e_k\}$,
$E_B = \{f_1,\ldots,f_k\}$, 
and $|E_A| = |E_B| = k$ on the vertex sets $A$ and $B$ respectively, so that:
\begin{enumerate}
    \item The graph $G\cup \{e_i,f_i\}$ contains a $K_4$, for every $1\leq i\leq k$, and
    \item The graph $G\cup \{e_i,f_j\}$ does not contain a $K_4$, for every $1\leq i,j\leq k$ with $i\neq j$.
\end{enumerate}

\begin{fact}[Section~3 and Theorem~5 of~\cite{DBLP:CzumajK20}]
    For every $n$, there exists a $(\Omega(n^2),O(n^{3/2}))$-lower bound graph.
\end{fact}
Given $x,y\in \{0,1\}^{n^2}$, we create $G_{x,y}$ by taking an $(\Omega(n^2),O(n^{3/2}))$-lower bound graph $G$ and augmenting it with edge sets $A_x$ and $B_y$ which are defined as follows.
\begin{align*}
    &A_x = \{e_i\in E_A\ :\ x[i] =1\} \\
    &B_y = \{f_j\in E_B\ :\ y[j] =1\} .
\end{align*} 

The following fact is implied by the definition of the lower bound graph (see also the proof of Theorem~2 in~\cite{DBLP:CzumajK20}).
\begin{fact}
    $\MaxClique_{\leq 3}(G_{x,y}) = 1$ if and only if $\disj(x,y) = 1$
\end{fact}

If the sets are not disjoint, they intersect in some index $i$, and the graph contains a $K_4$ on the vertices of $e_i\cup f_i$.
On the other hand, since $H_A$ and $H_B$ are bipartite, any possible $K_4$ must contain only one edge of $E_A$ and one of $E_B$.
The structure of $G$ guarantees that the only way to get such a $K_4$ is with edges $e_i$ and $f_i$ of the same index $i$, i.e., in which case the inputs intersect at index $i$.

The theorem now follows immediately from Lemma~\ref{lem:family-of-streaming-lower-bound}.
\end{proof}
\subsection{Colorability}

\begin{theorem}\label{theo:coloring-is-hard}
    Any streaming certification scheme for $\Col_{\geq 3}$ consuming $m$-bit memory with $c$-bit certificates satisfies $c+m = \Omega(n^2)$ bits in $n$-node graphs.
\end{theorem}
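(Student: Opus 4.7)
The plan is to prove the lower bound by reduction from Set Disjointness $\disj_N$ using the family-of-lower-bound-graphs framework of Lemma~\ref{lem:family-of-streaming-lower-bound}. Concretely, for $N = \Theta(n^2)$, I would exhibit a family
\[
\mathcal{G} = \{G_{x,y} = (V, E \cup A_x \cup B_y) \mid (x,y) \in \{0,1\}^N \times \{0,1\}^N\}
\]
of graphs on $n$ nodes with the property that $\chi(G_{x,y}) \geq 3$ (equivalently, $G_{x,y}$ is non-bipartite) if and only if $\disj(x,y) = 1$, i.e., $x \cap y \neq \varnothing$. Combining this with $\CCN(\disj_N) = \Omega(N)$ and Lemma~\ref{lem:family-of-streaming-lower-bound} then immediately yields $c + m = \Omega(N) = \Omega(n^2)$.

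First, I would fix the common vertex set $V$ with $|V| = n$, partitioned into two groups $V_A$ and $V_B$ of roughly equal size together with a small number of auxiliary ``coupling'' vertices, and a fixed edge set $E$ that is bipartite with respect to some base 2-coloring of $V$. Alice's edges $A_x$ will live among vertices on her side (inside $V_A$, or between $V_A$ and auxiliary nodes), encoded so that Alice can activate up to $\Theta(n^2)$ possible edges depending on her input~$x$. Symmetrically for Bob's $B_y$ on the $V_B$ side. The crux is to choose $E$ so that its 2-colorings are essentially forced up to swapping the two color classes, so that the only way for the combined graph $E \cup A_x \cup B_y$ to remain bipartite is for no ``coincidence'' coordinate $(i,j)$ with $x[i,j] = y[i,j] = 1$ to exist.

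Second, I would verify the equivalence $\chi(G_{x,y}) \geq 3 \iff \disj(x,y) = 1$ in two parts: (a)~in the disjoint case, exhibit an explicit proper 2-coloring of $V$ compatible with $E$, $A_x$, and $B_y$ simultaneously; and (b)~in the intersecting case, exhibit an explicit odd cycle of $G_{x,y}$ passing through the gadget at any coincident coordinate $(i,j)$ with $x[i,j] = y[i,j] = 1$. With both directions in hand, Lemma~\ref{lem:family-of-streaming-lower-bound} directly converts any semi-streaming certification scheme for $\Col_{\geq 3}$ with $c + m = o(n^2)$ into a non-deterministic two-party protocol for $\disj_N$ of communication complexity $o(N)$, contradicting the $\Omega(N)$ non-deterministic communication complexity of $\disj$.

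The main obstacle will be constructing the base graph $(V, E)$ that simultaneously (i)~uses only $n$ vertices while permitting $A_x$ and $B_y$ to encode $\Theta(n^2)$ independent bits; (ii)~forces enough rigidity on its 2-colorings so that a single coincident coordinate is sufficient to create a genuine odd cycle rather than merely an extra edge within a color class that can be repaired by recoloring; and (iii)~precludes any spurious odd cycles arising from the base edges alone or from interactions between gadgets at distinct pairs $(i,j) \neq (i',j')$. A natural starting point is to adapt the Bit Gadget used in Section~\ref{sec:mincover-hard-direction} for minimum vertex cover and the permutation-style gadget of Section~\ref{sec:Colorability} for $\Col_{\leq}$, replacing the clique/color constraints used there by carefully oriented odd-cycle-creating connectors between the $x$-controlled and $y$-controlled sides, and then checking coordinate-by-coordinate that the gadgets do not interfere.
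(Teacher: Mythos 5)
There is a genuine gap, and it is not just the missing gadget: the reduction is oriented the wrong way. You propose a family with $\chi(G_{x,y})\geq 3$ if and only if $x\cap y\neq\varnothing$, and in step~(b) you plan to produce an odd cycle precisely at a coincident coordinate. Plugging this into Lemma~\ref{lem:family-of-streaming-lower-bound}, the two-party nondeterministic protocol you obtain accepts exactly when the sets \emph{intersect}. But nondeterministic communication complexity is not closed under complement: certifying intersection costs only $O(\log N)$ bits (the prover names a common index, each player checks its own bit), so $\CCN$ of the function you are reducing from is $O(\log N)$ and no superlogarithmic bound on $c+m$ follows. The $\Omega(N)$ bound holds for certifying \emph{disjointness}, so the family must satisfy $\chi(G_{x,y})\geq 3 \iff x\cap y=\varnothing$ — exactly as the paper's other hard-direction reductions do (cf.\ Fact~\ref{fact:nex-fact-on-bit-gadget}, where edges are inserted when input bits are $0$ so that the target property holds on disjoint instances). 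This reversal changes the nature of the required gadget: you need disjointness (a universally quantified condition guaranteeing that for every coordinate at least one of the two players' edges is present) to force an odd cycle, and a single common coordinate with both edges \emph{absent} to unlock a proper $2$-coloring; the ``monotone'' gadget you sketch, where an intersection inserts an odd cycle, cannot be salvaged by simply relabeling.

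Beyond that, the construction itself — the only nontrivial content of the theorem — is left as a list of desiderata, and you explicitly flag it as the main obstacle. The paper does not build it either: its proof is a two-line citation of the explicit family of streaming lower bound graphs for $\disj$ and $\Col_{\geq 3}$ from Section~5.2 and Theorem~16 of~\cite{AbboudCKP21}, followed by an application of Lemma~\ref{lem:family-of-streaming-lower-bound}. So your high-level framework matches the paper's, but as written the argument both omits the key object and, if the missing gadget were built to your specification, would prove nothing. To repair it, either cite the construction of~\cite{AbboudCKP21} directly or redesign the gadget with the equivalence $\chi(G_{x,y})\geq 3 \iff x\cap y=\varnothing$.
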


\begin{proof}
Again, we use a construction from Section~5.2 (and Theorem~16) from~\cite{AbboudCKP21} where it is proved that there is a  family of streaming lower bound graphs for $\disj$ and $\Col_{\geq 3}$. The theorem follows from Lemma~\ref{lem:family-of-streaming-lower-bound}. 
\end{proof}

\section{Conclusions}

In this paper we have introduced a new tool for streaming computation, that we refer to as {\em streaming certification}, and we have demonstrated its applicability, concentrating on the decision versions of graph problems, and on the semi-streaming regime. 
Streaming certification was inspired by the \emph{annotated streaming model}~\cite{ChakrabartiCMT14}, and by work in the area of \emph{distributed certification}~\cite{FKP13,GS16,KKP10}. In the latter, the access to the input is limited not by memory restrictions (as in the case in streaming) but merely because it is distributed. Distributed certification finds many applications in the design of fault-tolerant distributed algorithms. We believe that streaming certification can be useful for problems that are hard in streaming, for which computing a solution must be deported to external powerful computing entities (e.g., cloud computing). This external entity may be not trustable or subject to errors. Yet, it can be asked to provide not only the solution, but also a certificate for that solution, such that the solution can then be checked by a semi-streaming algorithm.   

An intriguing direction for future research is the exploration of randomness, aiming to determine to what extent randomization can reduce the space complexity of streaming certification schemes with bounded error.
That is, the study of MA-type of protocols, instead of just NP-type protocols. 
In fact, one could envision considering interactive protocols, e.g., AM (Arthur-Merlin) or even MAM (Merlin-Arthur-Merlin) with multiple passes, etc., in the context of streaming algorithms, as it was successfully recently achieved in the context of distributed computing. 

\bibliographystyle{abbrv}
\bibliography{nondet-streaming-bib}

\end{document}